\newcommand{\lcm}{{\rm lcm}}
\newcommand{\Z}{\mathbb{{Z}}}
\newcommand{\gf}{{\mathrm{GF}}}
\newcommand{\C}{{\mathcal{C}}}
\newcommand{\RM}{{\mathrm{RM}}}
\newcommand{\bzero}{{\mathbf{0}}}
\newcommand{\code}{\C_{(q,m,\delta)}}
\newcommand{\tcode}{\tilde{\C}_{(q,m,\delta)}}
\newcommand{\ocode}{\overline{\C}_{(q,m,\delta)}}
\newcommand{\gx}{\overline{g}_{(q,m,\delta)}(x)}
\newcommand{\gt}{\tilde{g}_{(q,m,\delta)}(x)}
\newcommand{\gp}{g_{(q,m,\delta)}^+(x)}
\newcommand{\gm}{g_{(q,m,\delta)}^-(x)}
\newcommand{\Fq}{\gf(q)}
\newcommand{\Fqm}{\gf(q^m)}
\newcommand{\Ftm}{\gf(2^m)}
\newcommand{\Ft}{\gf(2)}
\newcommand{\al}{\alpha}
\newcommand{\de}{\delta}
\newcommand{\lam}{\lambda}
\newcommand{\ep}{\epsilon}
\newcommand{\ti}{\tilde}
\newcommand{\supp}{\text{supp}}
\newcommand{\ol}{\overline}
\newcommand{\lf}{\lfloor}
\newcommand{\rf}{\rfloor}
\newcommand{\pr}{\prime}
\newtheorem{theorem}{Theorem}
\newtheorem{lemma}[theorem]{Lemma}
\newtheorem{proposition}[theorem]{Proposition}
\newtheorem{corollary}[theorem]{Corollary}
\newtheorem{conj}{Conjecture}
\newtheorem{example}{Example}
\newtheorem{remark}{Remark}
\begin{document}

\title{A Family of Reversible BCH Codes}

\author{Shuxing~Li, ~Cunsheng~Ding, and Hao Liu    
\thanks{S. Li is with the Department of Mathematics, Hong Kong University of Science and Technology,
                                                  Clear Water Bay, Kowloon, Hong Kong, China  (E-mail: lsxlsxlsx1987@gmail.com).}
\thanks{C. Ding is with the Department of Computer Science
                                                  and Engineering, Hong Kong University of Science and Technology,
                                                  Clear Water Bay, Kowloon, Hong Kong, China (E-mail: cding@ust.hk).}

\thanks{H. Liu is with the Department of Computer Science
                                                  and Engineering, Hong Kong University of Science and Technology,
                                                  Clear Water Bay, Kowloon, Hong Kong, China (E-mail: hliuar@connect.ust.hk).}

}

\date{\today}
\maketitle

\begin{abstract}
Cyclic codes are an interesting class of linear codes due to their efficient encoding and decoding algorithms as well as their theoretical importance. BCH codes form a subclass of cyclic codes and are very important in both theory and practice as they have good error-correcting capability and are widely used in communication systems, storage devices and consumer electronics. However, the dimension and minimum distance of BCH codes are not known in general. The objective of this paper is to study the dimension and minimum distance of a family of BCH codes over finite fields, i.e., a class of reversible BCH codes.
\end{abstract}


\section{Introduction}\label{sec-intro}

Throughout this paper, let $p$ be a prime and let $q$ be a power of $p$.
An $[n,k,d]$ linear code $\C$ over $\gf(q)$ is a $k$-dimensional subspace of $\gf(q)^n$ with minimum
(Hamming) distance $d$. A linear code $\C$ over $\gf(q)$ is called a \emph{linear code with complementary dual (in short, LCD)}
if $\C \cap \C^\perp =\{\bzero\}$, where $\C^\perp$ denotes the dual of $\C$.

An $[n,k]$ linear code $\C$ is called {\em cyclic} if $(c_0,c_1, \cdots, c_{n-1}) \in \C$ implies $(c_{n-1}, c_0, c_1, \cdots, c_{n-2}) \in \C$. By identifying any vector $(c_0,c_1, \cdots, c_{n-1}) \in \gf(q)^n$ with
$$
c_0+c_1x+c_2x^2+ \cdots + c_{n-1}x^{n-1} \in \gf(q)[x]/(x^n-1),
$$
any code $\C$ of length $n$ over $\gf(q)$ corresponds to a subset of the quotient ring $\gf(q)[x]/(x^n-1)$. A linear code $\C$ is cyclic if and only if the corresponding subset is an ideal of the ring $\gf(q)[x]/(x^n-1)$.

Note that every ideal of $\gf(q)[x]/(x^n-1)$ is principal. Let $\C=\langle g(x) \rangle$ be a cyclic code, where $g(x)$ is monic and has the smallest degree among all the elements of $\C$. Then $g(x)$ is unique and called the {\em generator polynomial}, and $h(x)=(x^n-1)/g(x)$ is referred to as the {\em parity-check} polynomial of $\C$. Let $f(x) \in \Fq[x]$ be a polynomial with degree $l$, then the reciprocal polynomial of $f$ is defined to be $f_R(x)=x^lf(x^{-1})$. $f$ is called self-reciprocal if and only if $f(x)=f_R(x)$. A cyclic code $\C$ with generator polynomial $g(x)$ is called \emph{reversible} if $g(x)$ is self-reciprocal. The reversibility implies if
$$
(c_0,c_1,\ldots,c_{n-1}) \in \C,
$$
then
$$
(c_{n-1},c_{n-2},\ldots,c_0) \in \C.
$$
It is easily seen that reversible cyclic codes are LCD codes.

Let $m \geq 1$ be a positive integer and let $n=q^m-1$. Let $\alpha$ be a generator of $\gf(q^m)^*$. For any $i$ with $1 \leq i \leq q^m-2$, let $m_i(x)$ denote the minimal polynomial of $\alpha^i$ over $\gf(q)$. For any $2 \leq \delta < n$, define
$$
g_{(q,m,\delta)}^+(x)=\lcm(m_{1}(x), m_{2}(x),\cdots, m_{\delta-1}(x)),
$$
and
$$
g_{(q,m,\delta)}^-(x)=\lcm(m_{n-\de+1}(x), m_{n-\de+2}(x),\cdots, m_{n-1}(x)).
$$
where $\lcm$ denotes the least common multiple of these minimal polynomials. Let $\C_{(q, \, m, \, \delta)}$ denote the cyclic code of length $n$ with generator polynomial $g_{(q,m,\delta)}^+(x)$. This code $\C_{(q,\,  m,\,  \delta)}$ is called the \emph{narrow-sense primitive BCH code} with \emph{designed distance} $\delta$. The codes $\C_{(q,\,  m,\,  \delta)}$ have been studied in series of  papers, including \cite{AKS,ACS92,AS94,Berlekamp,Charp90,Charp98,YH96,DDZ15,D,KL72,Mann,YF}. However, the dimension and minimum distance of the codes $\C_{(q,\,  m,\,  \delta)}$ are still open in general.

For any $2 \leq \delta < (n+2)/2$, define
$$
\ti{g}_{(q,\, m,\, \delta)}(x)=\lcm(\gm,\gp).
$$
Let $\tcode$ denote the cyclic code of length $n$ with generator polynomial $\gt$. $\tcode$ is a reversible cyclic code
which was studied by Massey for the data storage applications \cite{Ma}. For the code $\tcode$ with length $n$ and minimum distance $d$, it was shown in \cite{TH} that
$$
\begin{cases}
d=\de & \mbox{if $\de \mid n$},\\
d \ge \de+1 & \mbox{otherwise}.
\end{cases}
$$
Moreover, if we consider the even-like subcode of $\tcode$, namely, the code $\ocode$ with length $n$ and generator polynomial
$$
\gx=(x-1)\gt,
$$
its minimum distance is at least $2\de$ by the BCH bound. Hence, a potential great improvement on the minimum distance is expected by considering the even-like subcode of $\tcode$. This intuition motivates us to study the code $\ocode$, which is a reversible BCH code.


Recently, LCD codes have found an important application in cryptography \cite{CG2015}. Moreover, it was shown that asymptotically good LCD codes exist \cite{EY,Massey1,Sendrier}. These are our major motivations of studying reversible BCH codes, which are a special class of LCD codes. The objective of this paper is to study the fundamental parameters, namely dimensions and minimum distances of the reversible BCH codes $\ocode$.

\section{The dimension of the reversible BCH codes $\ocode$}

Our task in this section is to study the dimension of the codes $\ocode$. It is in general a hard problem to determine the dimension of a BCH code.

\subsection{The dimension of $\ocode$ when $\de$ is relatively small}

In \cite{YH96}, the authors derived the dimension of the narrow-sense primitive BCH code $\code$ when $\de$ is relatively small. In this subsection, we develop similar results for the dimension of the reversible BCH code $\ocode$.

Any positive integer $s$ with $0 \le s \le n$ has a unique $q$-ary expansion as $s=\sum_{i=0}^{m-1}s_iq^i$, where $0 \le s_i \le q-1$. We use $C_{\de}$ to denote the $q$-cyclotomic coset in $\Z_n$ containing $\de$, which is defined by
$$
C_{\de}=\{ \delta q^i \bmod{n}: 0 \leq i \leq m-1\}.
$$
We use $cl(\de)$ to denote the coset leader of $C_{\de}$, which is the smallest integer in $C_{\de}$. The $q$-ary expansion sequence of $s=\sum_{i=0}^{m-1}s_iq^i$ is denoted as $\ol{s}=(s_{m-1},s_{m-2},\ldots,s_0)$. Below, we simply call the $q$-ary expansion sequence of $s$ as the sequence of $s$, whenever this causes no confusion. The weight of $\ol{s}$ is defined to be the number of nonzero entries among $\ol{s}$ and denoted by $wt(\ol{s})$. Define the support of $\ol{s}$ as
$$
\supp(\ol{s})=\{0 \le i \le m-1 \mid s_i \ne 0\}.
$$


We have the following properties regarding $q$-cyclotomic cosets modulo $n=q^m-1$.

\begin{lemma}\label{lem-cyccoset1}
Let $m \ge 2$ and $n=q^m-1$. Then we have the following.
\begin{itemize}
\item[1)] When $m$ is odd, for $1 \le i \le q^{(m+1)/2}$, $|C_i|=|C_{-i}|=m$. For $1 \le i,j \le q^{(m+1)/2}$, $j \in C_i$, or equivalently $-j \in C_{-i}$, if and only if
$$
j \equiv q^li \pmod{n}
$$
for some $0 \le l \le m-1$.

\item[2)] When $m$ is even, $|C_{q^{m/2}+1}|=|C_{-q^{m/2}-1}|=\frac{m}{2}$ and $|C_i|=|C_{-i}|=m$ for $1 \le i \le 2q^{m/2}$, $i \ne q^{m/2}+1$. For $1 \le i,j \le 2q^{m/2}$, $j \in C_i$, or equivalently $-j \in C_{-i}$, if and only if
$$
j \equiv q^li \pmod{n}
$$
for some $0 \le l \le m-1$.
\end{itemize}
\end{lemma}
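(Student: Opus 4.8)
The two ``if and only if'' clauses require almost no work. The condition $j\equiv q^l i\pmod n$ for some $0\le l\le m-1$ is exactly the definition of $j\in C_i$, and since the map $x\mapsto -x$ on $\Z_n$ sends $C_i$ bijectively onto $C_{-i}$, we get both $|C_{-i}|=|C_i|$ and the equivalence $j\in C_i\Leftrightarrow -j\in C_{-i}$ for free. So the real content is the cardinality assertions, and everything hinges on one arithmetic criterion: for a divisor $s$ of $m$, $iq^s\equiv i\pmod n$ holds if and only if $\frac{q^m-1}{q^s-1}\mid i$. I would prove this by rewriting $iq^s\equiv i$ as $(q^m-1)\mid i(q^s-1)$ and cancelling the factor $q^s-1$, which divides $q^m-1$ because $s\mid m$. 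Since $|C_i|$ is the least $s$ with $iq^s\equiv i$ and always divides $m$ (as $q^m\equiv 1\pmod n$), it follows that $|C_i|<m$ precisely when $\frac{q^m-1}{q^s-1}\mid i$ for some proper divisor $s$ of $m$.

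Because $\frac{q^m-1}{q^s-1}=q^{m-s}+q^{m-2s}+\cdots+1$ grows as $s$ shrinks, the obstruction with the smallest modulus comes from the largest proper divisor of $m$. When $m$ is odd this largest proper divisor is at most $m/3$, so each such modulus exceeds $q^{m-s}\ge q^{2m/3}\ge q^{(m+1)/2}$, the last inequality being valid for all $m\ge 3$ and strict once the lower-order terms are added. Hence no $i$ with $1\le i\le q^{(m+1)/2}$ can be divisible by any of them, which forces $|C_i|=m$ and proves part 1); the borderline $m=3$ is read off directly from $\frac{q^3-1}{q-1}=q^2+q+1>q^2$.

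When $m$ is even I would treat $s=m/2$ separately: here the modulus is $\frac{q^m-1}{q^{m/2}-1}=q^{m/2}+1$, and its only multiple inside $[1,2q^{m/2}]$ is $q^{m/2}+1$ itself, since $2(q^{m/2}+1)>2q^{m/2}$. Every smaller proper divisor satisfies $s\le m/3$ and yields a modulus exceeding $q^{2m/3}>2q^{m/2}$ (for $m\ge 6$, with $m=4$ checked by hand), so it contributes no $i$ in range. Thus $q^{m/2}+1$ is the unique $i\in[1,2q^{m/2}]$ with $|C_i|<m$, giving $|C_i|=m$ for all others. Finally, $(q^{m/2}+1)q^{m/2}=q^m+q^{m/2}\equiv q^{m/2}+1\pmod n$ shows $|C_{q^{m/2}+1}|\mid m/2$, and the same size estimate rules out every proper divisor $t\le m/4$ of $m/2$, for which $\frac{q^m-1}{q^t-1}>q^{3m/4}>q^{m/2}+1$, pinning the length down to exactly $m/2$.

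The only delicate point is making the threshold estimates sharp enough to handle the smallest exponents ($m=3$ in the odd case, $m=4$ in the even case) and to confirm that the exceptional coset has length exactly $m/2$ rather than a proper divisor of it. In each instance this reduces to checking that $\frac{q^m-1}{q^s-1}$ strictly exceeds the relevant range bound, which is routine but must be carried out carefully near the boundary.
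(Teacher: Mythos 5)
Your proposal is correct, but there is nothing in the paper to compare it against line by line: for this lemma the paper gives no proof at all, declaring it ``essentially the same'' as that of Theorem 3 in \cite{YH96} (see also Lemmas 8--9 of \cite{AKS}) and omitting it. Your write-up therefore supplies genuine content that the paper defers to the literature. Your two structural observations are right: the ``if and only if'' clauses are immediate from the paper's definition $C_\delta=\{\delta q^i \bmod n : 0\le i\le m-1\}$ together with the bijection $x\mapsto -x$ carrying $C_i$ onto $C_{-i}$, so the entire content is the cardinality claims. Your central criterion --- for $s\mid m$, $iq^s\equiv i\pmod{q^m-1}$ if and only if $\frac{q^m-1}{q^s-1}\mid i$, obtained by writing $q^m-1=(q^s-1)\cdot\frac{q^m-1}{q^s-1}$ and cancelling --- is exactly the device used in the cited references, so in spirit yours is the standard argument, carried out self-containedly. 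The threshold estimates all check out: for odd $m$ every proper divisor satisfies $s\le m/3$, so the modulus strictly exceeds $q^{m-s}\ge q^{2m/3}\ge q^{(m+1)/2}$ (the exponent inequality $2m/3\ge(m+1)/2$ holding precisely for $m\ge 3$); for even $m$ the divisor $s=m/2$ produces the single in-range exception $i=q^{m/2}+1$ (its double lies outside $[1,2q^{m/2}]$), the remaining divisors $s\le m/3$ give moduli above $2q^{m/2}$ for $m\ge 6$ with $m=4$ verified directly, and the exceptional coset has size exactly $m/2$ because any proper divisor $t\le m/4$ of $m/2$ would force $\frac{q^m-1}{q^t-1}>q^{3m/4}>q^{m/2}+1$. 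The borderline cases you flag as delicate are precisely the right ones, and you handle them correctly.
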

\begin{proof}
1) The proof is essentially the same as that of \cite[Theorem 3]{YH96} (see also {\rm \cite[Lemmas 8,9]{AKS}}), and
is omitted.

2) The proof is essentially the same as that of \cite[Theorem 3]{YH96}, and is omitted.
\end{proof}

\begin{lemma}\label{lem-cyccoset2}
Let $m \ge 2$ and let $n=q^m-1$. Then the following statements hold.
\begin{itemize}
\item[1)] When $m$ is odd, for $1 \le i,j \le q^{(m+1)/2}$, $-j \in C_i$ if and only if
\begin{align*}
\ol{i}&=(0,\ldots,0,\underset{\frac{m-1}{2}}{q-1},\ldots,\underset{1}{q-1},u), \\
\ol{j}&=(0,\ldots,0,\underset{\frac{m-1}{2}}{q-1-u},\underset{\frac{m-3}{2}}{q-1},\ldots,q-1),
\end{align*}
or
\begin{align*}
\ol{i}&=(0,\ldots,0,\underset{\frac{m-1}{2}}{q-1-u},\underset{\frac{m-3}{2}}{q-1},\ldots,q-1), \\
\ol{j}&=(0,\ldots,0,\underset{\frac{m-1}{2}}{q-1},\ldots,\underset{1}{q-1},u),
\end{align*}
where $0 \le u \le q-1$.

\item[2)] When $m$ is even and $q>2$, for $1 \le i,j \le 2q^{m/2}$, $-j \in C_i$ if and only if
\begin{align*}
\ol{i}&=(0,\ldots,0,\underset{\frac{m}{2}}{1},\underset{\frac{m}{2}-1}{q-1},\ldots,q-1,q-2), \\
\ol{j}&=(0,\ldots,0,\underset{\frac{m}{2}}{1},\underset{\frac{m}{2}-1}{q-1},\ldots,q-1,q-2),
\end{align*}
or
\begin{align*}
\ol{i}&=(0,\ldots,0,\underset{\frac{m}{2}-1}{q-1},\ldots,\underset{1}{q-1},q-2), \\
\ol{j}&=(0,\ldots,0,\underset{\frac{m}{2}}{1},\underset{\frac{m}{2}-1}{q-1},\ldots,q-1),
\end{align*}
or
\begin{align*}
\ol{i}&=(0,\ldots,0,\underset{\frac{m}{2}}{1},\underset{\frac{m}{2}-1}{q-1},\ldots,q-1), \\
\ol{j}&=(0,\ldots,0,\underset{\frac{m}{2}-1}{q-1},\ldots,\underset{1}{q-1},q-2),
\end{align*}
or
\begin{align*}
\ol{i}&=(0,\ldots,0,\underset{\frac{m}{2}-1}{q-1},\ldots,q-1), \\
\ol{j}&=(0,\ldots,0,\underset{\frac{m}{2}-1}{q-1},\ldots,q-1).
\end{align*}
\item[3)] When $m$ is even and $q=2$, for $1 \le i,j \le 2^{(m/2)+1}$, $-j \in C_i$ if and only if
\begin{align*}
\ol{i}&=(0,\ldots,0,\underset{\frac{m}{2}-2}{1},\ldots,1), \\
\ol{j}&=(0,\ldots,0,\underset{\frac{m}{2}}{1},\ldots,1),
\end{align*}
or
\begin{align*}
\ol{i}&=(0,\ldots,0,\underset{\frac{m}{2}}{1},\ldots,1), \\
\ol{j}&=(0,\ldots,0,\underset{\frac{m}{2}-2}{1},\ldots,1),
\end{align*}
or
\begin{align*}
\ol{i}&=(0,\ldots,0,\underset{\frac{m}{2}-1}{1},\ldots,1), \\
\ol{j}&=(0,\ldots,0,\underset{\frac{m}{2}-1}{1},\ldots,1),
\end{align*}
or
\begin{align*}
\ol{i}&=(0,\ldots,0,\underset{\frac{m}{2}}{1},0,1,\ldots,1), \\
\ol{j}&=(0,\ldots,0,\underset{\frac{m}{2}}{1},\ldots,1,0,1),
\end{align*}
or
\begin{align*}
\ol{i}&=(0,\ldots,0,\underset{\frac{m}{2}}{1},\ldots,1,0,1), \\
\ol{j}&=(0,\ldots,0,\underset{\frac{m}{2}}{1},0,1,\ldots,1).
\end{align*}
\end{itemize}
\end{lemma}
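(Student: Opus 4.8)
The plan is to translate the cyclotomic-coset condition into a purely combinatorial statement about $q$-ary expansion sequences. First I would record two elementary facts about arithmetic modulo $n=q^m-1$. Since $q^m\equiv 1\pmod n$, multiplying a nonzero residue by $q$ cyclically rotates its length-$m$ sequence, so $\ol{q^l s}$ is the $l$-fold rotation of $\ol{s}$; and since $q^m-1=\sum_{k=0}^{m-1}(q-1)q^k$, negation acts digitwise as $\ol{-s}=(q-1-s_{m-1},\ldots,q-1-s_0)$, the digitwise complement. Combining these, the condition $-j\in C_i$ (which by Lemma~\ref{lem-cyccoset1} means $-j\equiv q^l i\pmod n$ for some $0\le l\le m-1$) is equivalent to saying that the complement of $\ol{j}$ is a cyclic rotation of $\ol{i}$. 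This reduction is what makes the whole statement tractable.

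Next I would exploit the size bounds. For $m$ odd, $i,j\le q^{(m+1)/2}$ forces the supports of $\ol{i}$ and $\ol{j}$ into the bottom $(m+1)/2$ coordinates, the boundary value $i=q^{(m+1)/2}$ being quickly ruled out because its only potential partner falls outside the admissible range. Hence $\ol{i}$ carries at least $(m-1)/2$ zeros in its top coordinates, while the complement of $\ol{j}$ carries at least $(m-1)/2$ entries equal to $q-1$ in its top coordinates. Because rotation preserves the digit multiset and the complement of $\ol{j}$ equals a rotation of $\ol{i}$, the two share a multiset containing at least $(m-1)/2$ zeros and at least $(m-1)/2$ copies of $q-1$; as the total length is $m$, this leaves room for at most one further digit $u$. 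This yields exactly the three admissible digit-profiles, which collapse into the single parametrized family $0\le u\le q-1$ appearing in part~1 (with $u=0$ and $u=q-1$ recovering the two "pure" profiles).

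Finally I would fix the rotation itself. Since the nonzero block of $\ol{i}$ sits at the bottom while the block of $(q-1)$'s in the complement of $\ol{j}$ sits at the top, the relative shift $l$ is essentially forced, and reading off the aligned sequences (in the two possible orderings of $i$ and $j$) produces the stated forms. For $m$ even I would run the same argument with the bound $2q^{m/2}$, which now permits the top support digit to be $1$ or $2$; this is the source of the additional configurations and of the entries equal to $q-2$, and it also requires separately accounting for the short coset $C_{q^{m/2}+1}$ of length $m/2$ isolated in Lemma~\ref{lem-cyccoset1}. The binary case $q=2$ must be handled on its own, since complementation is bit-flipping and there is no intermediate digit $u$, so the matching is governed entirely by how the run of ones can be positioned, which is what produces the five distinct subcases. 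I expect the main obstacle to be precisely this even-characteristic and binary bookkeeping: verifying that the alignment of the top run with the bottom support, together with the boundary and short-coset exceptions, yields all and only the listed sequence pairs, with no omission or duplication.
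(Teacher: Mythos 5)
Your proposal is correct and follows essentially the same route as the paper's proof: the paper writes $-j \in C_i$ as $q^l i + j \equiv 0 \pmod{n}$, observes that the sum must therefore be the all-$(q-1)$ string so that $\ol{q^l i}$ is the digitwise complement of $\ol{j}$ (exactly your rotation-plus-complement reduction), and then uses the size bounds to constrain weights before the same alignment case analysis. Your multiset phrasing and single parameter $u$ package the paper's weight-and-support case split slightly more cleanly, but the substance is identical.
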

\begin{proof}
1) If $-j \in C_i$, then there exists an $l$ with $0 \le l \le m-1$, such that $q^li+j\equiv 0\pmod{n}$. Hence,
$$
\ol{q^li+j}=(q-1,q-1,\ldots,q-1).
$$
Since $m=wt(\ol{q^li+j})\le wt(\ol{i})+wt(\ol{j}) \le m+1$, we have
$\{wt(\ol{i}),wt(\ol{j})\}=\{\frac{m-1}{2},\frac{m+1}{2}\}$ or $\{wt(\ol{i}),wt(\ol{j})\}=\{\frac{m+1}{2}\}.$
If $\{wt(\ol{i}),wt(\ol{j})\}=\{\frac{m-1}{2},\frac{m+1}{2}\}$, then clearly, $\supp(\ol{q^li}) \cap \supp(\ol{j})=\emptyset$. Otherwise, if $\supp(\ol{q^li}) \cap \supp(\ol{j}) \ne \emptyset$, there is at least one entry in $\ol{q^li+j}$, which is not $q-1$. Hence, $\ol{q^li}$ and $\ol{j}$ must have the following two forms
\begin{align*}
\ol{q^li}&=(\underset{m-1}{q-1},\ldots,\underset{\frac{m+1}{2}}{q-1},0,\ldots,0),\\
\ol{j}&=(0,\ldots,0,\underset{\frac{m-1}{2}}{q-1},\ldots,q-1),
\end{align*}
or
\begin{align*}
\ol{q^li}&=(\underset{m-1}{q-1},\ldots,\underset{\frac{m-1}{2}}{q-1},0,\ldots,0),\\
\ol{j}&=(0,\ldots,0,\underset{\frac{m-3}{2}}{q-1},\ldots,q-1),
\end{align*}
If $\{wt(\ol{i}),wt(\ol{j})\}=\{\frac{m+1}{2}\}$, then clearly, $|\supp(\ol{q^li}) \cap \supp(\ol{j})|\ge 1$. If $|\supp(\ol{q^li}) \cap \supp(\ol{j})|>1$, then there is at least one entry in $\ol{q^li+j}$, which is not $q-1$. Hence, $|\supp(\ol{q^li}) \cap \supp(\ol{j})|=1$. Therefore, $\ol{q^li}$ and $\ol{j}$ must have the following $2q-4$ forms
\begin{align*}
\ol{q^li}&=(\underset{m-1}{q-1},\ldots,\underset{\frac{m+1}{2}}{q-1},u,0,\ldots,0), \\
\ol{j}&=(0,\ldots,0,\underset{\frac{m-1}{2}}{q-1-u},\underset{\frac{m-3}{2}}{q-1},\ldots,q-1),
\end{align*}
or
\begin{align*}
\ol{q^li}&=(\underset{m-1}{q-1},\ldots,\underset{\frac{m+1}{2}}{q-1},\underset{\frac{m-1}{2}}{0},\ldots,0,u), \\
\ol{j}&=(0,\ldots,0,\underset{\frac{m-1}{2}}{q-1},\ldots,q-1,q-1-u),
\end{align*}
where $1 \le u \le q-2$. Therefore, the conclusion follows.

2) If $-j \in C_i$, then there exists an $l$ with $0 \le l \le m-1$, such that $q^li+j\equiv 0\pmod{n}$. Hence,
$$
\ol{q^li+j}=(q-1,q-1,\ldots,q-1).
$$
Since $m=wt(\ol{q^li+j})\le wt(\ol{i})+wt(\ol{j}) \le m+2$, we must have
$$
\{wt(\ol{i}),wt(\ol{j})\}=\begin{cases}
\{\frac{m}{2}+1\}, \mbox{ or} \\
\{\frac{m}{2},\frac{m}{2}+1\}, \mbox{ or} \\
\{\frac{m}{2}-1,\frac{m}{2}+1\}, \mbox{ or} \\
\{\frac{m}{2}\}.
\end{cases}
$$
If $\{wt(\ol{i}),wt(\ol{j})\}=\{\frac{m}{2}+1\}$, then $|\supp(\ol{q^li}) \cap \supp(\ol{j})|=2$. Hence, $\ol{q^li}$ and $\ol{j}$ must have the following form
\begin{align*}
\ol{q^li}&=(\underset{m-1}{q-1},\ldots,\underset{\frac{m}{2}+1}{q-1},\underset{\frac{m}{2}}{q-2},\underset{\frac{m}{2}-1}{0},\ldots,0,1), \\
\ol{j}&=(0,\ldots,0,\underset{\frac{m}{2}}{1},\underset{\frac{m}{2}-1}{q-1},\ldots,q-1,q-2).
\end{align*}
If $\{wt(\ol{i}),wt(\ol{j})\}=\{\frac{m}{2}, \frac{m}{2}+1\}$, then $|\supp(\ol{q^li}) \cap \supp(\ol{j})|=1$. Hence, $\ol{q^li}$ and $\ol{j}$ must have the following two forms
\begin{align*}
\ol{q^li}&=(\underset{m-1}{q-1},\ldots,\underset{\frac{m}{2}+1}{q-1},\underset{\frac{m}{2}}{q-2},0,\ldots,0), \\
\ol{j}&=(0,\ldots,0,\underset{\frac{m}{2}}{1},\underset{\frac{m}{2}-1}{q-1},\ldots,q-1),
\end{align*}
or
\begin{align*}
\ol{q^li}&=(\underset{m-1}{q-1},\ldots,\underset{\frac{m}{2}}{q-1},\underset{\frac{m}{2}-1}{0},\ldots,0,1), \\
\ol{j}&=(0,\ldots,0,\underset{\frac{m}{2}-1}{q-1},\ldots,\underset{1}{q-1},q-2).
\end{align*}
If $\{wt(\ol{i}),wt(\ol{j})\}=\{\frac{m}{2}-1, \frac{m}{2}+1\}$, then $|\supp(\ol{q^li}) \cap \supp(\ol{j})|=0$. Hence, there is at least one entry in $\ol{(q^li+j)}$ which is not equal to $q-1$.
If $\{wt(\ol{i}),wt(\ol{j})\}=\{\frac{m}{2}\}$, then $|\supp(\ol{q^li}) \cap \supp(\ol{j})|=0$. Hence, $\ol{q^li}$ and $\ol{j}$ must have the following form
\begin{align*}
\ol{i}&=(\underset{m-1}{q-1},\ldots,\underset{\frac{m}{2}}{q-1},0,\ldots,0), \\
\ol{j}&=(0,\ldots,0,\underset{\frac{m}{2}-1}{q-1},\ldots,q-1).
\end{align*}
Therefore, the conclusion follows.

3) The proof is similar to that of 2) and is omitted here.
\end{proof}

As a consequence, we have the following proposition.

\begin{proposition}\label{prop-cyccoset}
Let $m \ge 2$ and $n=q^m-1$.
\begin{itemize}
\item[1)] Suppose $m$ is odd. Then
\begin{align*}
&|\{(cl(i),cl(j)) \mid -j \in C_i, 1\le i,j\le l\}| \\
&= \begin{cases}
0 & \mbox{if $1 \le l \le q^{(m+1)/2}-q$}, \\
2h & \mbox{if $l=q^{(m+1)/2}-q+h$, $1 \le h \le q-2$}, \\
2(q-1) & \mbox{if $q^{(m+1)/2}-1\le l\le q^{(m+1)/2}$}.
\end{cases}
\end{align*}
\item[2)] Suppose $m$ is even and $q>2$. Then
\begin{align*}
&|\{(cl(i),cl(j)) \mid -j \in C_i, 1\le i,j\le l\}| \\
&= \begin{cases}
0 & \mbox{if $1 \le l \le q^{m/2}-2$}, \\
1 & \mbox{if  $q^{m/2}-1\le l\le 2q^{m/2}-3$}, \\
2 & \mbox{if $l=2q^{m/2}-2$}, \\
4 & \mbox{if $2q^{m/2}-1 \le l \le 2q^{m/2}$}.
\end{cases}
\end{align*}
\item[3)] Suppose $m$ is even and $q=2$. Then
\begin{align*}
&|\{(cl(i),cl(j)) \mid -j \in C_i, 1\le i,j\le l\}| \\
&= \begin{cases}
0 & \mbox{if $1 \le l \le 2^{m/2}-2$}, \\
1 & \mbox{if $2^{m/2}-1 \le l \le 2^{(m/2)+1}-4$}, \\
3 & \mbox{if $2^{(m/2)+1}-3 \le l \le 2^{(m/2)+1}-2$}, \\
5 & \mbox{if $2^{(m/2)+1}-1 \le l \le 2^{(m/2)+1}$}.
\end{cases}
\end{align*}
\end{itemize}
\end{proposition}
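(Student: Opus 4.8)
The statement is a direct counting consequence of Lemma~\ref{lem-cyccoset2}, with Lemma~\ref{lem-cyccoset1} used to guarantee that the integers we produce are genuine coset leaders and that distinct forms give distinct cosets. The plan is as follows. Since the relation $-j\in C_i$ depends only on the cosets $C_i$ and $C_j$, a pair $(cl(i),cl(j))$ lies in the set being counted precisely when both $C_i$ and $C_j$ meet the interval $[1,l]$; equivalently, writing $(i,j)$ for the representatives furnished by Lemma~\ref{lem-cyccoset2}, the pair is switched on as soon as $l\ge\max(i,j)$. Thus the whole proposition reduces to three routine steps: first, reading off from each sequence form in Lemma~\ref{lem-cyccoset2} the integer values $i=\sum_k i_kq^k$ and $j=\sum_k j_kq^k$; second, computing the threshold $\max(i,j)$ for each form; and third, accumulating the contributions as $l$ grows.

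I would organize the computation by the three cases of Lemma~\ref{lem-cyccoset2}. In each case a form with $\ol{i}\ne\ol{j}$ comes paired with its transpose (the ``or'' swap in the lemma), so it contributes the two ordered pairs $(cl(i),cl(j))$ and $(cl(j),cl(i))$, whereas a self-paired (palindromic) form with $\ol{i}=\ol{j}$ contributes a single pair. For instance, in the even case with $q>2$ the four forms of Lemma~\ref{lem-cyccoset2}(2) evaluate to $i=j=q^{m/2}-1$, then the asymmetric family $\{q^{m/2}-2,\,2q^{m/2}-1\}$, and then $i=j=2q^{m/2}-2$; their thresholds $q^{m/2}-1<2q^{m/2}-2<2q^{m/2}-1$ switch the count from $0$ to $1$ to $2$ to $4$, exactly reproducing the stated table. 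The even case with $q=2$ is handled identically using the five forms of Lemma~\ref{lem-cyccoset2}(3), which split into one self-paired family $i=j=2^{m/2}-1$ and two asymmetric families, with thresholds $2^{m/2}-1<2^{(m/2)+1}-3<2^{(m/2)+1}-1$ producing the counts $1,3,5$.

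The odd case is the delicate one and is where the real work lies. Here Lemma~\ref{lem-cyccoset2}(1) gives a one-parameter family indexed by $0\le u\le q-1$: up to transposition, $i(u)=q^{(m+1)/2}-q+u$ and $j(u)=q^{(m+1)/2}-1-uq^{(m-1)/2}$. Two points need care. First, the threshold $\max(i(u),j(u))$ is not monotone: for $u=0$ it equals $j=q^{(m+1)/2}-1$, whereas for $1\le u\le q-1$ a short inequality (using $q^{(m-1)/2}\ge q$ when $m\ge3$) gives $i(u)-j(u)=u(1+q^{(m-1)/2})-(q-1)>0$, so the threshold becomes $i(u)=q^{(m+1)/2}-q+u$; this is what makes the count rise in steps of $2$ through $l=q^{(m+1)/2}-q+h$ for $1\le h\le q-2$. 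Second, and crucially, the values $u=0$ and $u=q-1$ produce the \emph{same} unordered pair of cosets, because $q^{(m+1)/2}-q=q\,(q^{(m-1)/2}-1)$ lies in the same $q$-cyclotomic coset as $q^{(m-1)/2}-1$; this collapse is exactly why the $q$ parameter values yield only $2(q-1)$ ordered coset-leader pairs rather than $2q$, and why both activate together at $l=q^{(m+1)/2}-1$. Verifying this identification, and checking via Lemma~\ref{lem-cyccoset1} that the remaining $i(u),j(u)$ represent pairwise distinct cosets with the claimed coset leaders, is the main obstacle; once it is settled, the three regimes $0$, $2h$, $2(q-1)$ fall out by collecting thresholds.
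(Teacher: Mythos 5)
Your proposal follows exactly the route the paper intends: the paper in fact offers no proof of Proposition~\ref{prop-cyccoset} at all (it is stated as an immediate consequence of Lemmas~\ref{lem-cyccoset1} and~\ref{lem-cyccoset2}), and your write-up supplies precisely the missing bookkeeping --- decoding each sequence form into an integer, checking coset-leader status, computing the activation threshold of each coset pair, and noting the crucial collapse of the $u=0$ and $u=q-1$ forms in the odd case. Your decodings and thresholds check out: in case 2) the forms evaluate to $q^{m/2}-1$, $2q^{m/2}-2$ and the pair $\{q^{m/2}-2,\,2q^{m/2}-1\}$, and in case 1) the inequality $i(u)-j(u)=u(1+q^{(m-1)/2})-(q-1)>0$ for $u\ge 1$, $m\ge 3$ is exactly what produces the staircase $2h$.

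One caveat, which is really a defect of the proposition itself rather than of your method: your claim that the $q=2$ case is ``handled identically'' with counts $1,3,5$ silently assumes $m\ge 6$. For $m=4$ the two ``asymmetric'' forms of Lemma~\ref{lem-cyccoset2}(3) coincide --- both give $i=j=5$, with sequence $(0,1,0,1)$, since $3\cdot 2^{(m/2)-1}-1=2^{(m/2)+1}-3=5$ --- so that family contributes one ordered pair rather than two. A direct check modulo $n=15$ (cosets $C_1,C_3,C_5,C_7$ with $-C_1=C_7$, $-C_3=C_3$, $-C_5=C_5$) gives the counts $0,1,2,4$ on the four ranges, not $0,1,3,5$. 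Hence case 3) of the proposition is actually false at $m=4$; this is consistent with Theorem~\ref{thm-gene}(3), which imposes $m\ge 6$ precisely on the ranges affected. Your argument becomes complete and correct once you either restrict case 3) to $m\ge 6$ or record the $m=4$ exception explicitly.
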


Combining Lemma~\ref{lem-cyccoset1} and Proposition~\ref{prop-cyccoset}, we have the following theorem.

\begin{theorem}\label{thm-gene}
Let $m \ge 2$ and $n=q^m-1$. Given an integer $2\le\de\le(n+2)/2$, let $\delta_q$ and $\delta_0$ be the unique integers such that $\delta-1 =\delta_q q+\delta_0$, where $0\le\delta_0<q$. Then the dimension $k$ of the code $\ocode$ is given
below.
\begin{itemize}
\item[1)] When $m$ is odd,
$$
k=\begin{cases}
q^m-2-2m(\de_q(q-1)+\de_0) & \mbox{if $\de\le q^{(m+1)/2}-q$,} \\
q^m-2-2m(q^{(m-1)/2}-1)(q-1) & \mbox{if $q^{(m+1)/2}-q+1\le\de\le q^{(m+1)/2}+1$.}
\end{cases}
$$
\item[2)] When $m$ is even and $q>2$,
$$
k=\begin{cases}
q^m-2-2m(\de_q(q-1)+\de_0) & \mbox{if $\de\le q^{m/2}-1$,} \\
q^m-2-2m(\de_q(q-1)+\de_0-\frac{1}{2}) & \mbox{if $q^{m/2}\le\de\le q^{m/2}+1$,} \\
q^m-2-2m(\de_q(q-1)+\de_0-1) & \mbox{if $q^{m/2}+2\le\de\le 2q^{m/2}-2$,} \\
q^m-2-2m(\de_q(q-1)+\de_0-\frac{3}{2}) & \mbox{if $\de=2q^{m/2}-1$,} \\
q^m-2-2m(\de_q(q-1)+\de_0-\frac{5}{2}) & \mbox{if $2q^{m/2}\le\de\le 2q^{m/2}+1$.}
\end{cases}
$$
\item[3)] When $m$ is even and $q=2$,
$$
k=\begin{cases}
2^m-2-2m(\de_q+\de_0) & \mbox{if $\de\le 2^{m/2}-1$ and $m \ge 4$,} \\
2^m-2-2m(\de_q+\de_0-\frac{1}{2}) & \mbox{if $2^{m/2}\le\de\le 2^{m/2}+1$ and $m \ge 4$,} \\
2^m-2-2m(\de_q+\de_0-1) & \mbox{if $2^{m/2}+2\le\de\le2^{(m/2)+1}-3$ and $m \ge 4$,} \\
2^m-2-2m(\de_q+\de_0-2) & \mbox{if $2^{(m/2)+1}-2\le\de\le2^{(m/2)+1}-1$ and $m \ge 6$,} \\
2^m-2-2m(\de_q+\de_0-3) & \mbox{if $2^{(m/2)+1}\le\de\le 2^{(m/2)+1}+1$ and $m \ge 6$.}
\end{cases}
$$
\end{itemize}
In addition, the minimum distance $d$ of the code satisfies that $d\ge2\de$.
\end{theorem}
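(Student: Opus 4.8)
The plan is to read the dimension off from the size of the defining set of $\ocode$ and to obtain the distance bound from the BCH bound. Set $U=\bigcup_{i=1}^{\de-1}C_i$ and $V=\bigcup_{i=1}^{\de-1}C_{-i}$. Since the generator polynomial is $\gx=(x-1)\gt$ with $\gt=\lcm(\gm,\gp)$, the exponents $i$ for which $\al^i$ is a root of $\gx$ form precisely the set $Z=\{0\}\cup U\cup V$, so that $k=n-|Z|$. The negation $i\mapsto -i$ is a size-preserving bijection carrying $U$ onto $V$; hence $|V|=|U|$, $|Z|=1+2|U|-|U\cap V|$, and
$$
k=q^m-2-\bigl(2|U|-|U\cap V|\bigr).
$$
Everything therefore reduces to evaluating $|U|$ and $|U\cap V|$ in each range of $\de$.

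For $|U|=\deg\gp$ (equivalently $n$ minus the dimension of the narrow-sense code $\code$), I would count coset leaders. By Lemma~\ref{lem-cyccoset1} and the analysis of \cite{YH96}, in the relevant ranges the coset leaders in $[1,\de-1]$ are exactly the integers not divisible by $q$, and every corresponding coset has size $m$ except $C_{q^{m/2}+1}$, which has size $\tfrac m2$ when $m$ is even. Thus the number of leaders is $(\de-1)-\lfloor(\de-1)/q\rfloor=\de_q(q-1)+\de_0$, so that $|U|=m\bigl(\de_q(q-1)+\de_0\bigr)$, reduced by $\tfrac m2$ as soon as $\de-1\ge q^{m/2}+1$ (the point at which the half-size coset $C_{q^{m/2}+1}$ is included).

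For $|U\cap V|$ the crucial remark is that a coset $C$ lies in $U\cap V$ if and only if it contains both a ``small positive'' element $i\in[1,\de-1]$ and a ``small negative'' element $-j$ with $j\in[1,\de-1]$, i.e.\ if and only if $-j\in C_i$ for some such $i,j$. The assignment $C\mapsto(cl(C),cl(-C))$ is then a bijection from the cosets contained in $U\cap V$ onto the pairs enumerated in Proposition~\ref{prop-cyccoset}, with Lemma~\ref{lem-cyccoset2} supplying the explicit $q$-ary expansions behind that enumeration. Each of these intersection cosets has full size $m$ (their leaders lie in the ``size-$m$'' range of Lemma~\ref{lem-cyccoset1}, and in particular differ from $q^{m/2}+1$), so $|U\cap V|=m\cdot(\text{pair count})$. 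Substituting $|U|$ and $|U\cap V|$ into the displayed formula for $k$ and simplifying in each subrange then yields the stated values; the half-integers $\tfrac12,\tfrac32,\tfrac52$ in the even cases are exactly the fingerprints of the factor $\tfrac{1}{2m}$ multiplying a sometimes-odd pair count, reinforced by the $\tfrac m2$ defect of the exceptional coset.

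Finally, $Z$ contains the $2\de-1$ cyclically consecutive exponents $n-\de+1,\ldots,n-1,0,1,\ldots,\de-1$, so the BCH bound immediately gives $d\ge2\de$. I expect the real work to be the boundary bookkeeping in the ranges around $q^{(m+1)/2}$, $q^{(m-1)/2}$ and $2q^{m/2}$: there one must use Lemma~\ref{lem-cyccoset2} to pin down exactly which new coset leaders enter $U$, which cosets switch on in $U\cap V$, and what their sizes are, and then verify that the net quantity $2|U|-|U\cap V|$ matches the claimed (often $\de$-independent) expression in every subrange. The symmetry $V=-U$ is what keeps this manageable and what ties the intersection term directly to Proposition~\ref{prop-cyccoset}.
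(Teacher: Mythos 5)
Your proposal is correct and takes essentially the same route as the paper's proof: your identity $k=q^m-2-2|U|+|U\cap V|$, with $|U|$ counted via the coset-leader/size facts of Lemma~\ref{lem-cyccoset1} (including the $\tfrac{m}{2}$ defect of $C_{q^{m/2}+1}$) and $|U\cap V|$ equal to $m$ times the pair count of Proposition~\ref{prop-cyccoset}, is exactly the paper's degree formula $\deg(\gx)=1+2m(\de_q(q-1)+\de_0)-\ep m-|\{(cl(i),cl(j))\mid -j\in C_i,\,1\le i,j\le\de-1\}|\,m$, and the distance claim is the same application of the BCH bound to the $2\de-1$ consecutive zeros.
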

\begin{proof}
Let $\overline{g}_{(q,m,\delta)}(x)$ be the generator polynomial of $\ocode$. For the dimension of the code, we only prove 2) since the proofs of 1) and 3) are similar. By 2) of Lemma~\ref{lem-cyccoset1}, the degree of $\ol{g}_{(q,m,\delta)}(x)$ equals
$$
1+2m(\de_q(q-1)+\de_0)-\ep m-|\{(cl(i),cl(j))\mid -j\in C_i, 1 \le i,j\le \de-1\}|m,
$$
where
$$
\ep=\begin{cases}
0 & \mbox{if $\de \le q^{m/2}+1$}, \\
1 & \mbox{if $\de \ge q^{m/2}+2$}.
\end{cases}
$$
Together with 2) of Proposition~\ref{prop-cyccoset}, we have
$$
\deg{(\ol{g}_{(q,m,\delta)}(x))}=\begin{cases}
1+2m(\de_q(q-1)+\de_0) & \mbox{if $\de\le q^{m/2}-1$,} \\
1+2m(\de_q(q-1)+\de_0-\frac{1}{2}) & \mbox{if $q^{m/2}\le\de\le q^{m/2}+1$,} \\
1+2m(\de_q(q-1)+\de_0-1) & \mbox{if $q^{m/2}+2\le\de\le 2q^{m/2}-2$,} \\
1+2m(\de_q(q-1)+\de_0-\frac{3}{2}) & \mbox{if $\de=2q^{m/2}-1$,} \\
1+2m(\de_q(q-1)+\de_0-\frac{5}{2}) & \mbox{if $2q^{m/2}\le\de\le 2q^{m/2}+1$.}
\end{cases}
$$
Therefore, the conclusion on the dimension in 2) follows. Moreover, by the BCH bound, $\ocode$ has minimum distance $d \ge 2\de$.
\end{proof}

\begin{remark}
For the code $\ocode$, if
\begin{equation}\label{condi}
\sum_{i=0}^{\de} \binom{n}{i}(q-1)^i > q^{n-k},
\end{equation}
then $d \le 2\de$ by the sphere packing bound. Therefore, the knowledge on the dimension of the code $\ocode$ may provide more precise information on the minimum distance in some cases. As an illustration, we use Theorem~\ref{thm-gene} and the inequality (\ref{condi}) to get some binary codes $\ol{\C}_{(2,m,\de)}$ with $d=2\de$, which are listed in Table~\ref{tab-mindis}.
\end{remark}

\begin{table}[h]
\begin{center}
\caption{Some binary code $\ol{\C}_{(2,m,\de)}$ with $d=2\de$}\label{tab-mindis}
\begin{tabular}{|c|c|} \hline
$m$  &   $\de$  \\ \hline
\{5,6,7\} & \{3\} \\
\{8,9,10,11,12,13\} & \{3,5\} \\
\{14,15,17,17,18,19\} & \{3,5,7\} \\
\{20\} & \{3,5,7,9\} \\ \hline
\end{tabular}
\end{center}
\end{table}

\subsection{The dimension of $\ocode$ when $\de=q^\lam$ and $\frac{m}{2}\le\lam\le m-1$}

In \cite{Mann}, the dimension of the narrow-sense primitive BCH code $\code$ with $\de=q^\lam$ is considered. The author derived two closed formulas concerning the dimension of such code. In this subsection, we use the idea in \cite{Mann} to give an estimate of the dimension of the reversible BCH code $\ocode$ with $\de=q^\lam$, where $\frac{m}{2}\le\lam\le m-1$.

Let $s$ and $r$ be two positive integers. Given a sequence of length $s$ and a fixed integer $a$ with $0 \le a \le q-1$, we say that the sequence contains a straight run of length $r$ with respect to $a$, if it has $r$ consecutive entries formed by $a$. If we view the sequence as a circle where the first and last entry are glued together, we say that the sequence contains a circular run of length $r$ with respect to $a$, if this circle has $r$ consecutive entries formed by $a$. When the specific choice of the integer $a$ does not matter, we simply say that the sequence has a straight or circular run of length $r$. Clearly, a straight run is also a circular run and the converse is not necessarily true. We use $l_r(s)$ to denote the number of sequences of length $s$, which contains a straight run of length $r$. Particularly, we define $l_r(0)=0$. The following is a recursive formula of $l_r(s)$ which was presented in \cite{Mann}.

\begin{proposition}{\rm \cite[p. 155]{Mann}}\label{prop-l}
Let $s$ and $r$ be two nonnegative integers. Then
$$
l_r(s)=\begin{cases}
  \begin{matrix}
  0 & \mbox{if $0\le s<r$}, \\
  1 & \mbox{if $s=r$},\\
  ql_r(s-1)+(q-1)(q^{s-r-1}-l_r(s-r-1)) & \mbox{if $s>r$}.
  \end{matrix}
\end{cases}
$$
\end{proposition}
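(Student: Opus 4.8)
The plan is to read $l_r(s)$ combinatorially and prove the recursion by deleting the first coordinate. Fix any symbol $a \in \{0,1,\ldots,q-1\}$ and call a length-$s$ sequence over this alphabet \emph{good} if it contains a straight run of at least $r$ consecutive entries equal to $a$; then $l_r(s)$ is the number of good sequences, a count that is clearly independent of the chosen symbol $a$ by relabelling the alphabet. The two boundary cases are immediate: a run of length $r$ cannot fit into fewer than $r$ positions, so $l_r(s)=0$ for $s<r$, and for $s=r$ the only good sequence is the constant string, so $l_r(r)=1$.

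For $s>r$ I would partition the good sequences of length $s$ according to whether the suffix occupying positions $2,\ldots,s$ is itself good. If this suffix is already good, the first entry is free, so these sequences number exactly $q\,l_r(s-1)$, yielding the first term of the recursion. All remaining good sequences have a non-good suffix; for such a sequence every run of at least $r$ copies of $a$ must meet position $1$, which forces positions $1,2,\ldots,r$ to equal $a$.

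It then remains to enumerate this second family, and here the bookkeeping must be handled with care. Because positions $2,\ldots,r$ already supply $r-1$ consecutive copies of $a$ inside the suffix, the suffix would immediately become good if position $r+1$ were also $a$; hence position $r+1$ must carry one of the $q-1$ symbols different from $a$, acting as a separator. With this separator fixed, the tail occupying positions $r+2,\ldots,s$, a block of length $s-r-1$, may be filled by any sequence that contains no run of $\ge r$ copies of $a$, of which there are $q^{s-r-1}-l_r(s-r-1)$. Multiplying gives the second term $(q-1)\bigl(q^{s-r-1}-l_r(s-r-1)\bigr)$, and adding the two contributions produces the claimed formula, with the convention $l_r(0)=0$ correctly covering the base case $s=r+1$.

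The main obstacle, and the only genuinely delicate point, is justifying that the single non-$a$ symbol placed at position $r+1$ fully decouples the forced prefix from the tail. One must check that this separator simultaneously caps the forced block at length $r-1$ inside the suffix and prevents any run of $a$ from straddling position $r+1$, so that a non-good suffix corresponds exactly to a tail of length $s-r-1$ that is non-good, and this tail can then be counted by the complementary quantity $q^{s-r-1}-l_r(s-r-1)$. Once this decoupling is verified, the recursion is immediate.
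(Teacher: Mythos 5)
Your proof is correct. There is nothing in the paper to compare it against: the proposition is quoted verbatim from Mann's 1962 paper (reference [Mann, p.~155]) and the paper supplies no proof of its own, so your argument fills a genuine gap rather than duplicating or diverging from an internal one. Your decomposition is the classical one: splitting the good sequences of length $s$ according to whether the length-$(s-1)$ suffix is already good gives the term $q\,l_r(s-1)$; in the remaining case the run must start at position $1$, forcing positions $1,\ldots,r$ to equal $a$, and the separator at position $r+1$ must differ from $a$ (else positions $2,\ldots,r+1$ would make the suffix good). The delicate point you flag is handled correctly: since position $r+1$ is not $a$, every run of $a$'s in the suffix lies either inside positions $2,\ldots,r$ (length at most $r-1$) or entirely inside the tail of length $s-r-1$, so the suffix is non-good exactly when the tail is, giving the factor $q^{s-r-1}-l_r(s-r-1)$ and a clean bijection with no double counting. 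The boundary cases also check out: $s=r+1$ is covered by the convention $l_r(0)=0$, and $r=1$ (where the forced block inside the suffix is empty) degenerates correctly to requiring the tail to avoid $a$ altogether.
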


Throughout the rest of this section, we always assume that $\de=q^\lam$ and $\frac{m}{2}\le\lam\le m-1$. Recall that the narrow-sense BCH code $\code$ has generator polynomial $g_{(q,m,\delta)}^+(x)$. Set $r=m-\lam$. Note that $\de-1$ corresponds to following sequence
$$
\ol{\de-1}=(\underbrace{0,\ldots,0}_{r},\underset{\lam-1}{q-1},q-1,\ldots,q-1).
$$
The key observation in \cite{Mann} is that for $1 \le i \le n-1$, $\al^i$ is a root of $g_{(q,m,\delta)}^+(x)$ if and only if the sequence of $i$ has a circular run of length at least $r$ with respect to $0$. Similarly, note that $n-\de+1$ corresponds to the following sequence
$$
\ol{n-\de+1}=(\underbrace{q-1,\ldots,q-1}_{r},\underset{\lam-1}{0},0,\ldots,0).
$$
Therefore, for $1 \le i \le n-1$, $\al^i$ is a root of $g_{(q,m,\delta)}^-(x)$ if and only if the sequence of $i$ has a circular run of length at least $r$ with respect to $q-1$. The following proposition presents the degree of $g_{(q,m,\delta)}^+(x)$ and $g_{(q,m,\delta)}^-(x)$.

\begin{proposition}{\rm \cite[p. 155]{Mann}}\label{prop-deg}
Set $r=m-\lam$. Then
$$
\deg(g_{(q,m,\delta)}^+(x))=\deg(g_{(q,m,\delta)}^-(x))=l_r(m)-1+(q-1)^2\sum_{u=0}^{r-2}(r-u-1)(q^{m-r-u-2}-l_r(m-r-u-2)).
$$
\end{proposition}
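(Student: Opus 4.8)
The plan is to count roots directly, using the circular-run characterization recorded just before the statement. Since $\gp=\lcm(m_1(x),\ldots,m_{\de-1}(x))$, its degree equals $|\bigcup_{i=1}^{\de-1}C_i|$, and by that characterization this is exactly the number of integers $i$ with $1\le i\le n-1$ whose sequence $\ol{i}$ has a circular run of length at least $r$ with respect to $0$; likewise $\deg(\gm)$ counts those $\ol{i}$ with a circular run of length at least $r$ with respect to $q-1$. I would first dispose of the equality $\deg(\gp)=\deg(\gm)$ by the digit-complementation map $i\mapsto n-i$ on $\{1,\ldots,n-1\}$: since $n=q^m-1$ has sequence $(q-1,\ldots,q-1)$ and no borrowing occurs, the sequence of $n-i$ is obtained from $\ol{i}$ by replacing each digit $s_j$ with $q-1-s_j$. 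This is a bijection carrying circular runs of $0$'s to circular runs of $(q-1)$'s of the same length, so the two degrees agree and it suffices to compute $\deg(\gp)$.

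Next I would split the set of admissible $\ol{i}$ according to whether the run is already straight. Write $A$ for the length-$m$ sequences possessing a straight run of at least $r$ zeros and $B$ for those possessing a circular but no straight run of at least $r$ zeros; these are disjoint, and their union is the set we must count, together with the single excluded all-zero sequence, which lies in $A$ and corresponds to $i=0$ (the all-$(q-1)$ sequence has no zero, hence is not in $A$, and in any case is $\equiv 0\pmod n$). By definition $|A|=l_r(m)$, so $A$ contributes $l_r(m)-1$.

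The heart of the argument is counting $B$. A sequence in $B$ has a run of zeros wrapping around the junction of $s_0$ and $s_{m-1}$; let $a\ge 1$ be the length of its maximal zero-prefix and $b\ge 1$ that of its maximal zero-suffix. Absence of a straight run of length $r$ forces $a,b\le r-1$, while the wrap-around run forces $a+b\ge r$; these bounds make the pair $(a,b)$ uniquely determined by the sequence, so nothing is counted twice. For fixed $(a,b)$ the remaining middle block has length $\ell=m-a-b$, with both endpoints nonzero and no internal run of $r$ zeros, whence it can be chosen in $(q-1)^2\bigl(q^{\ell-2}-l_r(\ell-2)\bigr)$ ways, the two factors $q-1$ accounting for the nonzero endpoints and $q^{\ell-2}-l_r(\ell-2)$ for the interior avoiding a zero-run of length $r$. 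Summing over admissible pairs and reindexing by $c=a+b$ and then $u=c-r$, so that the number of pairs with $a+b=c$ is $r-u-1$ and $\ell-2=m-r-u-2$, yields
$$
|B|=(q-1)^2\sum_{u=0}^{r-2}(r-u-1)\bigl(q^{m-r-u-2}-l_r(m-r-u-2)\bigr),
$$
and adding $|A|-1$ gives the claimed formula.

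I expect the bookkeeping of $B$ to be the main obstacle, specifically justifying that the middle-block count is valid in every summand. This is exactly where the hypothesis $\frac{m}{2}\le\lam$ enters: it is equivalent to $2r\le m$, which guarantees $\ell=m-a-b\ge m-2(r-1)\ge 2$ for all admissible pairs, so that $\ell-2\ge 0$ and the expression $q^{\ell-2}-l_r(\ell-2)$ is a genuine nonnegative count rather than a degenerate boundary term. I would verify the extreme case $r=1$ (empty sum, $\deg(\gp)=l_r(m)-1$) separately as a sanity check, and confirm that the only excluded index lying in $A\cup B$ is $i=0$, accounting for the lone $-1$.
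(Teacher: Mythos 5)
Your proof is correct. One thing to note at the outset: the paper itself contains no proof of this proposition---it is quoted directly from Mann \cite{Mann}---so there is no in-paper argument to compare yours against; what you have produced is a self-contained reconstruction of the counting that underlies Mann's formula, and it is sound. The key steps all check out: $\deg(\gp)$ equals the number of $i\in\{1,\dots,n-1\}$ whose $q$-ary sequence has a circular zero-run of length at least $r$ (the characterization recorded just before the statement); the digit-complementation bijection $i\mapsto n-i$ (no borrows occur since $\ol{n}=(q-1,\dots,q-1)$) correctly reduces $\deg(\gm)$ to $\deg(\gp)$; the split into sequences with a straight zero-run (contributing $l_r(m)-1$, the $-1$ accounting for the all-zero sequence, which is indeed the only element of $A\cup B$ not corresponding to an index in $\{1,\dots,n-1\}$, the all-$(q-1)$ sequence having no zeros at all) and sequences whose only long run wraps around; and the parametrization of the latter by the maximal zero-prefix length $a$ and maximal zero-suffix length $b$, which forces $1\le a,b\le r-1$ and $a+b\ge r$, is unambiguous, so nothing is double-counted. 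The middle-block count $(q-1)^2\bigl(q^{\ell-2}-l_r(\ell-2)\bigr)$ with $\ell=m-a-b$ is right because the block's endpoints are nonzero, so any zero-run of the block lives in its interior of length $\ell-2$; your reindexing by $u=a+b-r$, with $r-u-1$ pairs for each $u$, $0\le u\le r-2$, yields exactly the stated sum. Finally, your remark that the hypothesis $\lam\ge m/2$ (equivalently $m\ge 2r$) is what guarantees $\ell-2\ge 0$ in every summand, together with the convention $l_r(0)=0$, correctly identifies where that hypothesis enters, and your $r=1$ sanity check ($\deg(\gp)=l_1(m)-1=q^m-(q-1)^m-1$, an empty sum) is consistent.
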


We have the following estimation on the dimension of $\ocode$.

\begin{theorem}
Set $r=m-\lam$. The dimension $k$ of $\ocode$ satisfies
$$
k \ge q^m-2l_r(m)+2l_r(m-r)-2(q-1)^2\sum_{u=0}^{r-2}(r-u-1)(q^{m-r-u-2}-l_r(m-r-u-2)),
$$
and
$$
k \le q^m-2l_r(m)+ml_r(m-r)-2(q-1)^2\sum_{u=0}^{r-2}(r-u-1)(q^{m-r-u-2}-l_r(m-r-u-2)).
$$
\end{theorem}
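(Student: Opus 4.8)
The plan is to reduce the dimension to a single counting problem---the number of common roots of $\gp$ and $\gm$---and then to bound that number by a circular-versus-straight run argument. Since $n=q^m-1$ is coprime to $q$, the polynomial $x^n-1$ is squarefree over $\Fq$, so each of $\gp,\gm,\gt$ is squarefree and its degree equals its number of roots. As $\gx=(x-1)\gt$ with $\gt=\lcm(\gp,\gm)$, and since the index $0$ lies in none of the cosets $C_1,\dots,C_{\de-1},C_{n-\de+1},\dots,C_{n-1}$, the element $\al^0=1$ is a root of neither factor; hence $\deg(\gx)=1+\deg(\gp)+\deg(\gm)-N$, where $N$ denotes the number of common roots of $\gp$ and $\gm$. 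Writing $k=n-\deg(\gx)$, using $\deg(\gp)=\deg(\gm)$ together with the closed form in Proposition~\ref{prop-deg}, a short computation gives $k=q^m-2l_r(m)-2(q-1)^2\sum_{u=0}^{r-2}(r-u-1)(q^{m-r-u-2}-l_r(m-r-u-2))+N$. The theorem is therefore equivalent to the two-sided estimate $2l_r(m-r)\le N\le m\,l_r(m-r)$.

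By the run characterization recalled before Proposition~\ref{prop-deg}, a common root $\al^i$ with $1\le i\le n-1$ corresponds exactly to a length-$m$ sequence $\ol{i}$ that, read circularly, contains both a run of length $\ge r$ of $0$'s and a run of length $\ge r$ of $(q-1)$'s; the hypothesis $\lam\ge m/2$ gives $r\le m/2$, so the two runs can coexist. For the upper bound I would anchor on the $(q-1)$-run: to each such $\ol{i}$ assign the smallest starting position $p\in\{0,\dots,m-1\}$ of an all-$(q-1)$ circular window of length $r$, and let $\tau$ be the length-$(m-r)$ word read off the complementary arc. Because any $0$-run is disjoint from the $(q-1)$-block and the complementary arc is contiguous, $\tau$ must contain a \emph{straight} run of $\ge r$ zeros, so $\tau$ ranges over a set of size $l_r(m-r)$. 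The map $\ol{i}\mapsto(p,\tau)$ is injective, since one recovers $\ol{i}$ by placing $(q-1)$'s on the window and $\tau$ on the rest; hence $N\le m\,l_r(m-r)$.

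For the lower bound I would exhibit $2l_r(m-r)$ admissible sequences directly. Fixing an all-$(q-1)$ block in positions $0,\dots,r-1$ and letting the remaining $m-r$ entries range over all words containing a straight run of $\ge r$ zeros yields $l_r(m-r)$ sequences, each carrying both required circular runs; fixing instead an all-$0$ block in positions $0,\dots,r-1$ and requiring a straight run of $\ge r$ copies of $q-1$ in the remaining entries yields another $l_r(m-r)$. These two families are disjoint, since they differ on positions $0,\dots,r-1$ (here $q-1\ne 0$), and every sequence produced is non-constant, hence indexes a genuine nonzero residue modulo $n$; this gives $N\ge 2l_r(m-r)$. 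Combining the two bounds with the displayed formula for $k$ proves the theorem.

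The main obstacle is the combinatorial control of $N$: passing between circular and straight runs without double counting, and verifying that the $0$-run and the $(q-1)$-run are forced into disjoint contiguous arcs---which is exactly where $2r\le m$ enters. The gap between the factors $2$ and $m$ reflects that the anchoring map is generally far from a bijection, since a single sequence may admit several distinct $(q-1)$-runs; closing this gap to an exact formula would require tracking how many such runs each sequence carries.
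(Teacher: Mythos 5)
Your proposal is correct, and its skeleton coincides with the paper's: both reduce the dimension to counting the set $N$ of common roots of $\gp$ and $\gm$ via $\deg(\gx)=\deg(\gp)+\deg(\gm)+1-|N|$, characterize membership in $N$ by the presence of a circular $0$-run and a circular $(q-1)$-run of length at least $r$, and then establish $2l_r(m-r)\le|N|\le m\,l_r(m-r)$. Your upper bound is the paper's argument with the roles of $0$ and $q-1$ exchanged: the paper rotates the $0$-run to the front, defining $N'$ as the set of words $0^r\tau$ whose tail $\tau$ contains a straight $(q-1)$-run, so that $|N'|=l_r(m-r)$ and every element of $N$ is a cyclic shift of an element of $N'$, whence $|N|\le m|N'|$; your injective anchoring map $\ol{i}\mapsto(p,\tau)$ is the same device. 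The genuine difference is the lower bound, and there your argument is the more reliable one. The paper asserts that $2\le|C_i|\le m$ for each $i\in N'$ ``implies'' $|N|\ge 2|N'|$; read as a union-of-cosets argument this is incomplete, because a single cyclotomic coset may contain many elements of $N'$ --- for instance, with $q=2$, $m=10$, $r=2$, the coset of $3$ (the word $0^81\,1$) has size $10$ but contains $7$ elements of $N'$, so the per-coset inequality $|C|\ge 2|C\cap N'|$ fails, even though the global bound remains true. Your construction of two explicit disjoint subfamilies of $N$ --- words $(q-1)^r\tau$ with $\tau$ containing a straight $0$-run, and words $0^r\tau$ with $\tau$ containing a straight $(q-1)$-run, each counted exactly by $l_r(m-r)$ and distinguished by their first symbol --- sidesteps cosets entirely and yields $|N|\ge 2l_r(m-r)$ rigorously. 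In short, you reproduce the paper's reduction and upper bound, and you repair, rather than merely reproduce, its lower bound; what the paper's coset viewpoint buys instead is the explanation of why the gap between the factors $2$ and $m$ is hard to close, which is exactly the remark you make at the end.
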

\begin{proof}
Since $\lam\ge\frac{m}{2}$, we have $m\ge 2r$. Define a set $N=\{1 \le i \le n-1 \mid g_{(q,m,\delta)}^+(\al^i)=g_{(q,m,\delta)}^-(\al^i)=0\}$. Since $\gx=\lcm(g_{(q,m,\delta)}^-(x),x-1,g_{(q,m,\delta)}^+(x))$, we have
$$
\deg(\gx)=\deg(g_{(q,m,\delta)}^+(x))+\deg(g_{(q,m,\delta)}^-(x))+1-|N|.
$$
Since $\deg(g_{(q,m,\delta)}^+(x))$ and $\deg(g_{(q,m,\delta)}^-(x))$ are known by Proposition~\ref{prop-deg}, it suffices to estimate the size of $N$. $N$ contains the number $1 \le i \le n-1$, such that $\ol{i}$ contains two runs of length $r$ with respect to $0$ and $q-1$, where at most one of them is a circular run. Let $N^{\pr}$ be the set of integers $1 \le i \le n-2$ such that the first $r$ entries of $\ol{i}$ is a straight run of length $r$ with respect to $0$ and the last $m-r$ entries contain a straight run of length $r$ with respect to $q-1$. Clearly, we have $|N^{\pr}|=l_r(m-r)$. Note that each element of $N$ is a proper cyclic shift of an element of $N^{\pr}$. Moreover, for each $i\in N^{\pr}$, we have
\begin{equation}\label{ineqn-1}
2\le|\{q^ji \bmod{n} \mid 0 \le j \le m-1\}|\le m,
\end{equation}
which implies
$$
2|N^{\pr}| \le |N| \le m|N^{\pr}|.
$$
Thus, the conclusion follows from a direct computation.
\end{proof}

\section{The minimum distance of the reversible BCH codes $\ocode$}

While it is difficult to determine the dimension of reversible BCH codes in general, it is more difficult to find out the minimum distance of reversible BCH codes \cite{Charp98}. For the code  $\ocode$, the BCH bound $d \geq 2 \delta$ is usually very tight. But it would be better if we can determine the minimum distance exactly. In this section, we determine the minimum distance $d$ of the code $\ocode$ in some special cases.

Given a codeword $c=(c_0,c_1,\ldots,c_{n-1}) \in \code$, we say $c$ is reversible if $(c_{n-1},c_{n-2},\ldots,c_0) \in \code$. Namely, $c \in \code$ is reversible if and only if $c \in \tcode$. The following theorem says that the reversible codeword in $\code$ provides some information on the minimum distance on $\ocode$.

\begin{theorem}\label{thm-mindis}
Let $c(x) \in \code$ be a reversible codeword of weight $w$. If $c(1) \ne 0$, then $\ocode$ contains a codeword $(x-1)c(x)$ whose weight is at most $2w$. Therefore the minimum distance $d$ of $\ocode$ satisfies $d\le 2w$. In particular, if the weight of $c(x)$ is $\de$, then the minimum distance $d=2\de$.
\end{theorem}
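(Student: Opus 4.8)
The plan is to exhibit $(x-1)c(x)$ concretely as a nonzero codeword of $\ocode$ of weight at most $2w$, and then to sandwich the minimum distance between this upper bound and the BCH lower bound already furnished by Theorem~\ref{thm-gene}. First I would settle membership. By the discussion preceding the theorem, a reversible codeword $c(x)\in\code$ in fact lies in $\tcode=\langle\gt\rangle$, so $\gt$ divides $c(x)$ in $\Fq[x]/(x^n-1)$; writing $c(x)=\gt\,a(x)$ and recalling that the generator polynomial of $\ocode$ is $\gx=(x-1)\gt$, multiplication by $x-1$ yields $(x-1)c(x)=\gx\,a(x)\in\langle\gx\rangle=\ocode$. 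Equivalently, $(x-1)c(x)$ vanishes at every root of $\gt$ because $c(x)$ does, and it vanishes at $x=1$ because of the explicit factor $x-1$; since $n=q^m-1$ is coprime to $p$, the polynomial $x^n-1$ is separable, so possessing all the roots of $\gx$ forces divisibility by $\gx$.

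Second, for the weight I would use that in the ring $\Fq[x]/(x^n-1)$ multiplication by $x$ is a cyclic shift, hence preserves Hamming weight, giving $\wt(x\,c(x))=\wt(c(x))=w$. Writing $(x-1)c(x)=x\,c(x)-c(x)$ and bounding the support of a difference by the union of the supports, I obtain $\wt((x-1)c(x))\le \wt(x\,c(x))+\wt(c(x))=2w$.

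Third, I must confirm that $(x-1)c(x)$ is genuinely nonzero, which is where the smallness of $c$ and the hypothesis $c(1)\neq 0$ matter. Comparing coefficients, the coefficient of $x^k$ in $(x-1)c(x)$ equals $c_{k-1}-c_k$ (indices modulo $n$), so $(x-1)c(x)=0$ precisely when all $c_k$ are equal, i.e. when $c(x)$ is a scalar multiple of $1+x+\cdots+x^{n-1}$, a constant word of weight $n$. The purpose of $c(1)\neq 0$ is to ensure $c(x)\notin\ocode$ — every codeword of the even-like subcode $\ocode$ satisfies $c(1)=0$ — so that passing to $(x-1)c(x)$ is the authentic way of entering $\ocode$ rather than $c$ already lying there; the smallness $w<n$ then rules out the degenerate constant word and leaves a nonzero codeword of weight at most $2w$, whence $d\le 2w$. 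In particular, when $w=\de$ we have $w<n$ because $\de\le(n+2)/2<n$, so the construction applies and gives $d\le 2\de$; combined with the BCH lower bound $d\ge 2\de$ from Theorem~\ref{thm-gene}, this pins down $d=2\de$.

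The main obstacle I anticipate is the third step: one must verify that $(x-1)c(x)$ does not collapse to the zero codeword and articulate precisely what $c(1)\neq 0$ buys. The hypothesis alone does not exclude the constant word $c(x)=\mu(1+x+\cdots+x^{n-1})$, which also has $c(1)=\mu n\neq 0$ yet is annihilated by $x-1$; it is instead the bound $w<n$, automatic in the relevant case $w=\de$, that guarantees nonvanishing. By contrast the membership and weight-doubling steps are routine manipulations in the quotient ring $\Fq[x]/(x^n-1)$.
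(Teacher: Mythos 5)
Your proposal is correct and takes essentially the same route as the paper: exhibit $(x-1)c(x)$ as a codeword of $\ocode$ of weight at most $2w$ (membership coming from $c(x)\in\tcode$ and $\gx=(x-1)\gt$), then invoke the BCH bound $d\ge 2\de$ to pin down $d=2\de$ when $w=\de$. You are in fact more careful than the paper's three-line proof: your observation that $c(1)\ne 0$ does \emph{not} exclude the all-ones word (which has $c(1)=-\mu\neq 0$ yet is killed by $x-1$), and that it is really the bound $w<n$ (automatic when $w=\de$) which guarantees $(x-1)c(x)\ne 0$, fills a small gap the paper glosses over.
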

\begin{proof}
Since $c(x)$ is reversible and $c(1) \ne 0$, we have $(x-1)c(x) \in \ocode$. The weight of $(x-1)c(x)$ is at most $2w$, which implies $d \le 2w$. In particular, if $w=\de$, together with the BCH bound, we have $d=2\de$.
\end{proof}



Let $c(x)=\sum_{i=0}^{n-1}c_ix^i$ be a codeword of a cyclic code $\C$ with length $n=q^m-1$. We can use the elements of $\Fqm^*$ to index the coefficients of $c(x)$. Similarly, let $\ol{\C}$ be the extended cyclic code of $\C$ and let $c(x)=\sum_{i=0}^{n}c_ix^i$ be a codeword of $\ol{\C}$ with length $q^m$. We can use the elements of $\Fqm$ to index the coefficients of $\ol{c(x)}$. The support of $c(x)$ (resp. $\ol{c(x)}$) is defined to be the set of elements in $\Fqm^*$ (resp. $\Fqm$), which correspond to the nonzero coefficients of $c(x)$ (resp. $\ol{c(x)}$).

Given a prime power $q$ and an integer $0 \le s \le q^m-1$, $s$ has a unique $q$-ary expansion $s=\sum_{i=0}^{m-1} s_iq^i$. The $q$-weight of $s$ is defined to be $wt_q(s)=\sum_{i=0}^{m-1}s_i$. Suppose $H$ is a subset of $\Fq^*$, then we use $H^{(-1)}$ to denote the subset $\{h^{-1} \mid h \in H\}$.

The following are two classes of reversible BCH codes whose minimum distances are known.

\begin{corollary}\label{cor-distance}
For the reversible BCH code $\ocode$, we have $d=2 \delta$ if $\de \mid n$.
\end{corollary}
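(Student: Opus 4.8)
The plan is to produce an explicit reversible codeword of weight exactly $\de$ in $\code$ and then invoke Theorem~\ref{thm-mindis}. Write $n=\de t$ with $t=n/\de$ an integer, and set
$$
c(x)=\frac{x^n-1}{x^t-1}=\sum_{j=0}^{\de-1}x^{jt}.
$$
This is a polynomial of degree $(\de-1)t<n$ with exactly $\de$ nonzero coefficients, all equal to $1$, so its Hamming weight is $\de$.

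First I would check that $c(x)\in\tcode$, which simultaneously gives $c(x)\in\code$ (since $\tcode\subseteq\code$) and the reversibility of $c(x)$ (since a codeword of $\code$ is reversible precisely when it lies in $\tcode$). For $1\le i\le\de-1$, evaluating the geometric sum at $\al^{\pm i}$ gives $c(\al^{\pm i})=\sum_{j=0}^{\de-1}(\al^{\pm it})^{j}$; because $\al^{\pm it}\ne1$ (as $\de\nmid i$) while $(\al^{\pm it})^{\de}=\al^{\pm in}=1$, each such sum telescopes to $0$. Hence $c$ vanishes at $\al^{\pm1},\dots,\al^{\pm(\de-1)}$ and, being defined over $\Fq$, at their entire $q$-cyclotomic cosets, so $c(x)\in\tcode$.

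The crucial step is to verify the hypothesis $c(1)\ne0$ of Theorem~\ref{thm-mindis}. Here $c(1)=\sum_{j=0}^{\de-1}1=\de$ in $\Fq$, so I must argue $p\nmid\de$. This is where the divisibility assumption enters: since $q$ is a power of $p$, we have $n=q^m-1\equiv-1\pmod p$, so $\gcd(p,n)=1$; as $\de\mid n$, it follows that $p\nmid\de$ and therefore $c(1)=\de\ne0$ in $\Fq$. With $c(x)$ a reversible codeword of $\code$ of weight $\de$ satisfying $c(1)\ne0$, the final clause of Theorem~\ref{thm-mindis} yields $d=2\de$ at once.

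The only real obstacle is the condition $c(1)\ne0$; without it the factor $(x-1)$ would already divide $c(x)$, placing the weight-$\de$ word $c(x)$ inside $\ocode$ and contradicting the BCH bound $d\ge2\de$ from Theorem~\ref{thm-gene}. The observation that $\de\mid n$ forces $p\nmid\de$ is exactly what rules this out, so the whole argument hinges on that elementary number-theoretic point, with everything else reducing to the routine geometric-sum verification above.
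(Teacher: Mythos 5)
Your proof is correct and follows essentially the same route as the paper: both produce a reversible codeword of weight $\de$ in $\code$ supported on the positions $jn/\de$ with nonzero coefficient sum, and then invoke Theorem~\ref{thm-mindis}. The only difference is that the paper cites the proof of the Tzeng--Hartmann theorem \cite{TH} for the existence of such a codeword, whereas you construct $c(x)=(x^n-1)/(x^{n/\de}-1)$ explicitly and verify its properties (including the key point that $\de\mid n$ forces $p\nmid\de$, hence $c(1)\ne 0$), making the argument self-contained.
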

\begin{proof}
It suffices to find a codeword $c(x)$ satisfying the condition in Theorem~\ref{thm-mindis}. If $\de \mid n$, by the proof of \cite[Theorem]{TH}, $\code$ contains a reversible codeword $c(x)$ with weight $\de$, where $c(x)=\sum_{i=0}^{\de-1}a_ix^{\frac{ni}{\de}}$ and $c(1)\ne 0$. The desired conclusion then follows from Theorem \ref{thm-mindis}.
\end{proof}

\begin{corollary}
Let $\de=2^r-1$ and $V$ be an $m$-dimensional vector space over $\Ft$. Suppose $1 \le r \le \lf \frac{m}{2} \rf$, then we can choose four $r$-dimensional subspaces of $V$, say $H_i$, $1 \le i \le 4$ of $V$, such that
$H_1 \cap H_2=\{0\}$ and $H_3 \cap H_4=\{0\}$. If $((H_1 \cup H_2) \setminus \{0\})^{(-1)}=(H_3 \cup H_4) \setminus \{0\}$, then the reversible BCH code $\overline{\C}_{(2,m,\delta)}$ has minimum distance $d=2\de$.
\end{corollary}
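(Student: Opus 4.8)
The plan is to exhibit, under the stated hypotheses, an explicit codeword of $\ocode$ of Hamming weight exactly $2\de$; since the BCH bound in Theorem~\ref{thm-gene} already gives $d\ge 2\de$, this forces $d=2\de$. The codeword I would use is the indicator word supported (with coordinates indexed by $\gf(2^m)^*$ as described above) on
$$
S=(H_1\cup H_2)\setminus\{0\}.
$$
Because $H_1\cap H_2=\{0\}$, this is a disjoint union of $H_1\setminus\{0\}$ and $H_2\setminus\{0\}$, so $|S|=(2^r-1)+(2^r-1)=2^{r+1}-2=2\de$, and the corresponding word $c(x)$ has weight exactly $2\de$.

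The technical heart is a power-sum identity for $\Ft$-subspaces: if $H\le\gf(2^m)$ has dimension $r$, then $\sum_{\be\in H\setminus\{0\}}\be^{\,j}=0$ for $1\le j\le 2^r-2=\de-1$. I would prove this by fixing a basis $v_1,\dots,v_r$ of $H$, writing each element as $\sum_k\ep_k v_k$ with $\ep_k\in\{0,1\}$, and expanding $(\sum_k\ep_k v_k)^j$ by the multinomial theorem in characteristic $2$. After summing over all $\ep\in\Ft^r$, a monomial survives only when every one of the $r$ coordinates receives at least one binary digit of $j$; otherwise some factor $\sum_{\ep_k\in\{0,1\}}1=0$ appears. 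Survival therefore requires $\wt_q(j)\ge r$, which is impossible for $1\le j\le 2^r-2$ since the only integer below $2^r$ of binary weight $r$ is $2^r-1$. Hence the sum vanishes.

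With this lemma the verification is routine. Identifying $c(\al^{\,i})$ with the power sum $\sum_{\be\in S}\be^{\,i}$, I would check the three groups of roots of the generator polynomial $\gx=(x-1)\gt$ separately. For $1\le j\le\de-1$, splitting $S$ into $H_1\setminus\{0\}$ and $H_2\setminus\{0\}$ and applying the lemma gives $c(\al^{\,j})=0$, so $c\in\code$. For the reciprocal roots, $c(\al^{-j})=\sum_{\be\in S}\be^{-j}=\sum_{\gamma\in S^{(-1)}}\gamma^{\,j}$, and the hypothesis $S^{(-1)}=(H_3\cup H_4)\setminus\{0\}$ together with $H_3\cap H_4=\{0\}$ lets me apply the lemma again (now to $H_3,H_4$) to obtain $c(\al^{-j})=0$ for $1\le j\le\de-1$. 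Finally $c(1)=|S|\bmod 2=0$ because $|S|=2\de$ is even, which supplies the extra root $\al^0=1$. Since $c$ has $\Ft$ coefficients, vanishing at $\al^{\,j}$ for $j\in\{0\}\cup\{\pm1,\dots,\pm(\de-1)\}$ forces vanishing on the full cyclotomic closure via the Frobenius map, so $c\in\ocode$ and $d\le 2\de$.

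I expect the power-sum lemma to be the only real obstacle; everything else is bookkeeping once the support $S$ is correctly identified. A secondary point worth stressing is that this codeword has $c(1)=0$, so Theorem~\ref{thm-mindis} does \emph{not} apply here; instead the even weight is precisely what places $c$ \emph{directly} in $\ocode$, and it is this direct membership, rather than multiplication by $x-1$, that yields the weight-$2\de$ word. I would also remark that the hypothesis $1\le r\le\lf m/2\rf$ is what makes such a configuration possible, since it permits two $r$-dimensional subspaces to meet only in $\{0\}$ inside an $m$-dimensional space.
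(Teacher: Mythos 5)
Your proposal is correct, and it produces exactly the same codeword as the paper --- the indicator vector of $(H_1\cup H_2)\setminus\{0\}$, which has even weight $2\de$, hence lies in the even-like subcode $\ocode$ directly (no multiplication by $x-1$), with the BCH bound supplying $d\ge 2\de$ --- but it certifies membership of that word in the code by a genuinely different and more elementary route. The paper embeds the punctured Reed--Muller codes $\RM^+(m-r,m)^*$ and $\RM^-(m-r,m)^*$ into $\C^+_{(2,m,\de)}$ and $\C^-_{(2,m,\de)}$, invokes the classification of minimum-weight codewords of punctured RM codes \cite[Corollary 5.3.3]{AK} to obtain words $c_i(x)$ supported on $H_i\setminus\{0\}$, and then uses the hypothesis $((H_1\cup H_2)\setminus\{0\})^{(-1)}=(H_3\cup H_4)\setminus\{0\}$ to identify $c_1(x)+c_2(x)$ with $c_3(x)+c_4(x)$ under the two coordinate labelings, placing the word in $\tilde{\C}_{(2,m,\de)}$. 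You instead prove from scratch the power-sum identity $\sum_{\be\in H\setminus\{0\}}\be^{j}=0$ for every $r$-dimensional $\Ft$-subspace $H$ and every $1\le j\le 2^r-2$; your multinomial argument in characteristic $2$ is sound, since a surviving monomial must use all $r$ basis indices, forcing $wt_2(j)\ge r$, which fails for every $j<2^r-1$. With that lemma you verify the roots $\al^{j}$, $\al^{-j}$ (via the inverse hypothesis and the disjointness of $H_3,H_4$), and $1$ directly, and conclude by the Frobenius/minimal-polynomial argument that the word lies in $\ocode$. Your route buys self-containedness --- it re-proves precisely the special case of the RM fact that is needed --- and it handles the reciprocal roots more transparently than the paper's step asserting that two codewords written in different coordinate orderings ``coincide''; what the paper's route buys is the structural connection to Reed--Muller codes, which explains where these subspace-indicator codewords come from. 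Your closing remarks (that Theorem~\ref{thm-mindis} does not apply since $c(1)=0$, and that $r\le\lfloor m/2\rfloor$ is what permits $H_1\cap H_2=\{0\}$) are accurate and consistent with the paper's reasoning.
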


\begin{proof}
Define $\C^+_{(2,m,\delta)}$ (resp. $\C^-_{(2,m,\delta)}$) to be the BCH code with length $n$ and generator polynomial $g_{(2,m,\delta)}^+(x)$ (resp. $g_{(2,m,\delta)}^-(x)$). Let $\al$ be a primitive element of $\Ftm$. We can assume the zeros of $\C^+_{(2,m,\delta)}$ (resp. $\C^-_{(2,m,\delta)}$) include the elements $\{\al^i \mid 1 \le i \le \de-1\}$ (resp. $\{\al^{-i} \mid 1 \le i \le \de-1\}$).

The BCH code $\C^+_{(2,m,\delta)}$ (resp. $\C^-_{(2,m,\delta)}$) contains a punctured Reed-Muller code $\RM^+(m-r,m)^*$ (resp. $\RM^-(m-r,m)^*$) as a subcode, in which $\RM^+(m-r,m)^*$ has zeros
$$
\{\al^i \mid 0 < i < 2^m-1, wt_2(i)<r\}
$$
and $\RM^-(m-r,m)^*$ has zeros
$$
\{\al^{-i} \mid 0 < i < 2^m-1, wt_2(i)<r\}.
$$
Let $c=(c_0,c_1,\ldots,c_{n-1})$ be a codeword of $\RM^+(m-r,m)^*$. Since $\RM^+(m-r,m)^*$ is a cyclic code, its coordinates can be indexed in the following way
\begin{equation}\label{eqn-1}
c=(\underset{1}{c_0},\underset{\al}{c_1},\ldots,\underset{\al^{n-1}}{c_{n-1}}),
\end{equation}
where $\sum_{j=0}^{n-1} c_j\al^{ij}=0$ for each $1 \le i \le \de-1$. Similarly, suppose $c^{\pr}=(c_0^{\pr},c_1^{\pr},\ldots,c_{n-1}^{\pr})$ is a codeword of $\RM^-(m-r,m)^*$. Then, its coordinates can be indexed in the following way
\begin{equation}\label{eqn-2}
c^{\pr}=(\underset{1}{c_0^{\pr}},\underset{\al^{-1}}{c_1^{\pr}},\ldots,\underset{\al^{-(n-1)}}{c_{n-1}^{\pr}}),
\end{equation}
where $\sum_{j=0}^{n-1} c_j^{\pr}\al^{-ij}=0$ for each $1 \le i \le \de-1$.

By \cite[Corollary 5.3.3]{AK}, $\RM^+(m-r,m)^*$ contains two minimum weight codewords $c_1(x)$ and $c_2(x)$, such that the support of $c_1(x)$ and $c_2(x)$ are $H_1 \setminus \{0\}$ and $H_2 \setminus  \{0\}$ respectively. Similarly, $\RM^-(m-r,m)^*$ contains two minimum weight codewords $c_3(x)$ and $c_4(x)$, such that the support of $c_3(x)$ and $c_4(x)$ are $H_3 \setminus  \{0\}$ and $H_4 \setminus  \{0\}$ respectively. Moreover, the coordinates of $c_1(x)$ and $c_2(x)$ are arranged in the way of~(\ref{eqn-1}) and the coordinates of $c_3(x)$ and $c_4(x)$ are arranged in the way of~(\ref{eqn-2}). Therefore,
$$
c_1(x)+c_2(x) \in \RM^+(m-r,m)^* \subset \C^+_{(2,m,\delta)}
$$
and
$$
c_3(x)+c_4(x) \in \RM^-(m-r,m)^* \subset \C^-_{(2,m,\delta)}.
$$
Since $((H_1 \cup H_2) \setminus \{0\})^{(-1)}=(H_3 \cup H_4) \setminus \{0\}$, by the arrangement of the coordinates of $c_i(x)$, $1 \le i \le 4$, the two codewords $c_1(x)+c_2(x)$ and $c_3(x)+c_4(x)$ coincide. Thus, we have $c_1(x)+c_2(x) \in \tilde{\C}_{(2,m,\delta)}$. Since $c_1(1)+c_2(1)=0$, we have a codeword $c_1(x)+c_2(x) \in \overline{\C}_{(2,m,\delta)}$ with weight $2\de$.
\end{proof}

\begin{example}
Let $q=2$, $m=5$ and $\de=3$. We consider the minimum distance of $\ol{\C}_{(2,5,3)}$. Let $\al$ be a primitive element of $\gf(2^5)$ and the minimal polynomial of $\al$ over $\Ft$ is $x^5+x^2+1$. Then we have the following four $2$-dimensional subspaces of $\gf(2^5)$:
\begin{align*}
H_1&=\{0,\al,\al^2,\al^{19}\}, \\
H_2&=\{0,\al^8,\al^{12},\al^{18}\}, \\
H_3&=\{0,\al^{12},\al^{13},\al^{30}\}, \\
H_4&=\{0,\al^{19},\al^{23},\al^{29}\}.
\end{align*}
Thus, we have $c_1(x)$ and $c_2(x)$ as codewords of $\C^+_{(2,5,3)}$, whose supports are $H_1 \setminus \{0\}$ and $H_2 \setminus \{0\}$. We have $c_3(x)$ and $c_4(x)$ as codewords of $\C^-_{(2,5,3)}$, whose supports are $H_3 \setminus \{0\}$ and $H_4 \setminus \{0\}$. Clearly, $((H_1 \cup H_2) \setminus \{0\})^{(-1)}=(H_3 \cup H_4) \setminus \{0\}$. Therefore, $c_1(x)+c_2(x)$ coincides with $c_3(x)+c_4(x)$, whose weight is six. Consequently, $c_1(x)+c_2(x) \in \ol{\C}_{(2,5,3)}$ and the minimum distance of $\ol{\C}_{(2,5,3)}$ equals $6$.
\end{example}

Based on our numerical experiment, we have the following conjecture, which can be regarded as an analogy of \cite[Chapter 9, Theorem 5]{MS77}.

\begin{conj}
Let $\delta=q^\lambda-1$, where $1 \leq \lambda \leq \lfloor m/2 \rfloor$. Then the code  $\ocode$ has minimum
distance $d=2\delta$.
\end{conj}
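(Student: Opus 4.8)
The plan is to produce a single codeword of $\ocode$ of weight exactly $2\de$; together with the BCH bound $d\ge2\de$ from Theorem~\ref{thm-gene} this forces $d=2\de$. By Theorem~\ref{thm-mindis} it is enough to exhibit a reversible codeword of $\code$ of weight $\de$ with $c(1)\ne0$ (equivalently a weight-$\de$ codeword of $\tcode$ that is not even-like); alternatively one may build an even-like codeword of $\tcode$ of weight $2\de$ directly, as in the proof of the preceding corollary. The engine in both cases is to replace the punctured Reed--Muller subcodes used there by $q$-ary generalized Reed--Muller codes.

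First I would establish the subcode relation $\GRM_q((q-1)(m-\lam),m)^*\subseteq\codep$. Its defining set consists of those $\al^s$ with $1\le wt_q(s)\le(q-1)\lam-1$, and since every $s$ with $1\le s\le\de-1=q^\lam-2$ satisfies $wt_q(s)\le(q-1)\lam-1$ (the value $wt_q(s)=(q-1)\lam$ occurring only for $s=q^\lam-1$), the defining set of $\codep$ is contained in it. The code $\GRM_q((q-1)(m-\lam),m)$ has minimum distance $q^\lam$ and, its order being a multiple of $q-1$, its minimum-weight codewords are exactly the scalar multiples of incidence vectors of $\lam$-dimensional affine subspaces of $\Fqm$. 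Puncturing the coordinate indexed by $0$ and restricting to subspaces through the origin, for every $\lam$-dimensional $\gf(q)$-subspace $H\subseteq\Fqm$ the vector $\mathbf{1}_{H\setminus\{0\}}$ is a codeword of $\codep$ of weight $q^\lam-1=\de$; equivalently $\sum_{t\in H}t^i=0$ for $1\le i\le q^\lam-2$. The mirror construction produces the analogous subspace codewords inside $\codem$.

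The case $\lam\mid m$ is already covered by Corollary~\ref{cor-distance}: here $\gf(q^\lam)$ is a subfield, $\gf(q^\lam)^*$ is closed under inversion, so $\mathbf{1}_{\gf(q^\lam)^*}$ satisfies $\sum_t t^{i}=\sum_t t^{-i}=0$ for $1\le i\le\de-1$ and hence lies in $\tcode$; its weight is $\de$ and $c(1)=q^\lam-1=-1\ne0$, so Theorem~\ref{thm-mindis} applies. The genuine content of the conjecture is the range $\lam\nmid m$, where no cyclic subgroup of order $q^\lam-1$ exists and the additive structure must be exploited. Since $2\lam\le m$ one can choose $\lam$-dimensional subspaces $H_1,H_2$ with $H_1\cap H_2=\{0\}$ and form
$$
c=\mathbf{1}_{H_1\setminus\{0\}}-\mathbf{1}_{H_2\setminus\{0\}}\in\codep,
$$
which has weight $2\de$ and is even-like because $c(1)=(q^\lam-1)-(q^\lam-1)=0$. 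It remains only to force $c\in\codem$, i.e.\ $c\in\tcode$; by the power-sum description this is the inversion-matching requirement $\sum_{t\in H_1\setminus\{0\}}t^{-i}=\sum_{t\in H_2\setminus\{0\}}t^{-i}$ for $1\le i\le\de-1$, which is exactly the hypothesis $((H_1\cup H_2)\setminus\{0\})^{(-1)}=(H_3\cup H_4)\setminus\{0\}$ appearing in the preceding corollary. Once this holds, $c$ lies in $\ocode$ and has weight $2\de$, finishing the argument.

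The main obstacle is precisely this last condition. A $\gf(q)$-subspace is closed under inversion only when it is a subfield, so for $\lam\nmid m$ there is no algebraic identity matching the inverse power sums of $H_1$ and $H_2$, and the conditional corollary cannot be invoked directly. To upgrade the conjecture to a theorem I would try to construct the subspace configuration explicitly, using the slack $2\lam\le m$: for example, fixing a $2\lam$-dimensional ambient subspace and choosing $H_1,H_2$ so that the image of $(H_1\cup H_2)\setminus\{0\}$ under $x\mapsto x^{-1}$ again splits as a union of two $\lam$-dimensional subspaces, or alternatively giving a counting argument that such a pair exists among all complementary pairs of $\lam$-subspaces. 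Proving that a valid configuration exists for \emph{every} admissible triple $(q,m,\lam)$ with $\lam\nmid m$ is the crux, and is what currently keeps the statement at the level of a conjecture rather than a theorem.
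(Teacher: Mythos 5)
There is a genuine gap here, and you name it yourself in your last paragraph: what you have written is not a proof but a reduction of the conjecture to another unproven existence claim. The statement in question is stated in the paper as a \emph{conjecture}, supported only by numerical experiments and an analogy with \cite[Chapter 9, Theorem 5]{MS77}; the paper contains no proof of it, so there is nothing on the paper's side to compare against. Your reduction machinery is essentially sound and is a sensible $q$-ary generalization of the paper's conditional corollary for $\de=2^r-1$: the subcode relation $\GRM_q((q-1)(m-\lam),m)^*\subseteq\codep$ is correct (every $1\le s\le q^\lam-2$ has $wt_q(s)\le(q-1)\lam-1$, and $q$-weight is constant on cyclotomic cosets), the identification of minimum-weight GRM codewords with subspace indicator vectors is the classical Delsarte--Goethals--MacWilliams result, and the case $\lam\mid m$ does indeed follow from Corollary~\ref{cor-distance} since $q^\lam-1\mid q^m-1$ exactly when $\lam\mid m$. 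But the case $\lam\nmid m$ is the entire content of the conjecture, and there your argument terminates in the hypothesis that a pair $H_1,H_2$ of $\lam$-dimensional subspaces exists with $H_1\cap H_2=\{0\}$ whose inverse-power-sum matching condition holds; you give neither a construction nor a counting argument for this, so the conjecture remains exactly as open after your argument as before it.

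Two smaller points. First, for $q>2$ the set-theoretic condition $((H_1\cup H_2)\setminus\{0\})^{(-1)}=(H_3\cup H_4)\setminus\{0\}$ borrowed from the paper's corollary is not by itself sufficient: your candidate codeword is $\mathbf{1}_{H_1\setminus\{0\}}-\mathbf{1}_{H_2\setminus\{0\}}$ with coefficients $\pm1$, so membership in $\codem$ requires the inverses to match up \emph{with consistent signs} (e.g.\ $(H_1\setminus\{0\})^{(-1)}=H_3\setminus\{0\}$ and $(H_2\setminus\{0\})^{(-1)}=H_4\setminus\{0\}$, up to swapping), not merely as unions; the paper's corollary is stated only for $q=2$, where this issue disappears. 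Second, note that the two routes you offer at the outset are not equally available: a reversible weight-$\de$ codeword of $\code$ with $c(1)\ne0$ (Theorem~\ref{thm-mindis}) is what works when $\lam\mid m$, while for $\lam\nmid m$ you are forced into the weight-$2\de$ even-like construction, precisely because, as you observe, a subspace closed under inversion is a subfield. A clean formulation of your reduction (``the conjecture holds for $(q,m,\lam)$ whenever an inversion-matched pair of disjoint $\lam$-subspaces exists'') would be a legitimate partial result worth recording, but it should be presented as such, not as a proof of the conjecture.
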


\section{The parameters of the reversible code $\ocode$ for small $\delta$}

In this section, we determine the parameters of the code $\ocode$ for a few small values of $\delta$. With the help of Theorem \ref{thm-gene} and Corollary \ref{cor-distance}, we can achieve this in some cases.

Recall that the Melas code over $\gf(q)$ is a cyclic code with length $n=q^m-1$ and generator polynomial $m_{-1}(x)m_1(x)$ and was first studied by Melas for the case $q=2$ \cite{Melas}. The weight distribution of the Melas code has been obtained for $q=2,3$ \cite{GSV,SV}. For $\delta = 2$, the code $\ocode$ is the even-like subcode of the Melas code. The following theorem is a direct consequence of Theorem \ref{thm-gene} and Corollary \ref{cor-distance}.

\begin{theorem}
Suppose $q$ is odd and $ m \ge 2$, then $\ol{\C}_{(q,m,2)}$ has parameters $[q^m-1, q^m-2-2m, 4]$.
\end{theorem}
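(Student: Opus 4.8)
The plan is to apply the two main tools developed earlier, namely Theorem~\ref{thm-gene} for the dimension and Corollary~\ref{cor-distance} for the minimum distance, to the specific case $\de=2$ with $q$ odd. Since $\de=2$ means $\de-1=1$, the quotient-remainder decomposition $\de-1=\de_q q+\de_0$ yields $\de_q=0$ and $\de_0=1$. The length is $n=q^m-1$ as usual.

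First I would establish the dimension. I need to verify that $\de=2$ falls into the first branch of the appropriate case of Theorem~\ref{thm-gene}. Since $q$ is odd, $q>2$, so when $m$ is even the relevant formula is part~2) and when $m$ is odd it is part~1). In either case the first branch applies provided $\de=2\le q^{(m+1)/2}-q$ (for $m$ odd) or $\de=2\le q^{m/2}-1$ (for $m$ even); one checks these hold for all $q\ge3$ and $m\ge2$, with only a small-parameter check needed at $m=2$. Substituting $\de_q=0$, $\de_0=1$ into the first-branch formula $k=q^m-2-2m(\de_q(q-1)+\de_0)$ gives $k=q^m-2-2m$, matching the claimed dimension. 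This part is essentially a direct substitution once the range condition is confirmed.

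Next I would determine the minimum distance. The key point is that $\de=2$ always divides $n=q^m-1$, since $n$ is even whenever $q$ is odd (as $q^m$ is odd). Hence Corollary~\ref{cor-distance} applies directly and yields $d=2\de=4$. This is the cleanest step, as it requires only the observation that $2\mid q^m-1$ for odd $q$.

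The main obstacle, though minor, will be the boundary verification at the smallest admissible value $m=2$: here $q^{m/2}-1=q-1$ and I must confirm $\de=2\le q-1$, which forces $q\ge3$, consistent with the hypothesis that $q$ is odd (hence $q\ge3$). I should also double-check that the hypothesis $\de\le(n+2)/2$ required by Theorem~\ref{thm-gene} is satisfied, which is immediate since $\de=2$ and $(n+2)/2=(q^m+1)/2\ge2$ for all $q\ge3$, $m\ge2$. Once these elementary range checks are dispatched, the three claimed parameters $[q^m-1,\,q^m-2-2m,\,4]$ follow by assembling the length, the substituted dimension formula, and the divisibility-based distance computation.
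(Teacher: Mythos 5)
Your proposal is correct and follows exactly the paper's own route: the paper derives this theorem as a direct consequence of Theorem~\ref{thm-gene} (substituting $\de_q=0$, $\de_0=1$ into the first-branch dimension formula) and Corollary~\ref{cor-distance} (using $2\mid q^m-1$ for odd $q$ to get $d=4$). Your additional range checks at $m=2$ and verification of $\de\le(n+2)/2$ are routine details the paper leaves implicit.
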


When $\delta = 3$, we have the following result.

\begin{theorem}\label{thm-dl2}
\begin{itemize}
\item[1)] When $q=2$ and $m \geq 4$, $\ol{\C}_{(q,m,3)}$ has parameters $[2^m-1, 2^m-2-2m, 6]$.
\item[2)] When $q^m \equiv 1 \pmod{3}$ and $m \geq 4$,  $\ol{\C}_{(q,m,3)}$ has parameters $[q^m-1, q^m-2-4m, 6]$.
\end{itemize}
\end{theorem}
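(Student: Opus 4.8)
The plan is to read off the dimension and the bound $d\ge 2\de=6$ from Theorem~\ref{thm-gene}, and then to match it by producing a codeword of weight exactly $6$.

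First, apply Theorem~\ref{thm-gene} with $\de=3$, so that $\de-1=2$. For $q\ge3$ we have $\de_q=0,\ \de_0=2$; since $3\le q^{(m+1)/2}-q$ for odd $m\ge5$ and $3\le q^{m/2}-1$ for even $m\ge4$, the first case of parts~1) and 2) gives $k=q^m-2-4m$, which is part~2). For $q=2$ we have $\de_q=1,\ \de_0=0$, and the first case of parts~1) and 3) gives $k=2^m-2-2m$ for every $m\ge4$, which is part~1). Concretely, $\gt$ is the product of the minimal polynomials attached to $C_{\pm1},C_{\pm2}$: when $q=2$ the collapses $m_2=m_1$ and $m_{-2}=m_{-1}$ leave two size-$m$ cosets, while for $q\ge3$ the four cosets $C_{\pm1},C_{\pm2}$ are distinct of size $m$. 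The same theorem yields $d\ge6$, so only the upper bound $d\le6$ remains.

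The condition $q^m\equiv1\pmod3$ in part~2) is exactly $3\mid n$; in part~1) this also holds when $m$ is even, since $2^m\equiv(-1)^m\pmod3$. In both situations $\de=3$ divides $n$, so Corollary~\ref{cor-distance} gives $d=6$ at once. This settles part~2) completely and settles part~1) for even $m$. The remaining case---$q=2$ with $m$ odd---is the crux, since then $3\nmid n$ and Corollary~\ref{cor-distance} is unavailable; note also that no weight-$\de$ reversible codeword exists in $\code$ for odd $m$ (it would force a ratio of support elements to be a primitive cube root of unity), so Theorem~\ref{thm-mindis} with $w=\de$ cannot be invoked either. Since $\gx=(x-1)m_1(x)m_{-1}(x)$, a weight-$6$ word of $\ocode$ is the same as a $6$-element set $S\subseteq\Ftm^*$ with $\sum_{s\in S}s=0$ and $\sum_{s\in S}s^{-1}=0$ (the root $x-1$ forces even weight, which $|S|=6$ already satisfies). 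I would take $S=(H_1\cup H_2)\setminus\{0\}$ for two $2$-dimensional $\Ft$-subspaces with $H_1\cap H_2=\{0\}$, exactly the configuration in the corollary following Theorem~\ref{thm-mindis} with $r=2$. For any $H=\{0,a,b,a+b\}$ the nonzero elements sum to $0$, so $\sum_{s\in S}s=0$ is automatic, while $\sum_{h\in H\setminus\{0\}}h^{-1}=(a^2+ab+b^2)/(ab(a+b))=:f(H)$; since $m$ is odd there is no primitive cube root of unity, whence $a^2+ab+b^2\ne0$ and $f(H)\ne0$ for every $H$. The second condition therefore reduces to the single requirement $f(H_1)=f(H_2)$.

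The main obstacle is thus to produce, for every odd $m\ge5$, two disjoint $2$-dimensional subspaces with $f(H_1)=f(H_2)$; the corollary following Theorem~\ref{thm-mindis} then places the common word $c_1+c_2$ in $\tilde{\C}_{(2,m,\de)}$ and, being of even weight, in $\ocode$. I would prove existence by counting: $\Ftm$ has $(2^m-1)(2^{m-1}-1)/3$ two-dimensional subspaces, which exceeds $2^m-1$ for $m\ge4$, so $H\mapsto f(H)\in\Ftm^*$ cannot be injective; and since the subspaces meeting a fixed $H_1$ in a nonzero point number at most $3(2^{m-1}-1)=O(2^m)$ against $\Theta(2^{2m})$ in total, a second subspace in the same $f$-fibre can be chosen disjoint from $H_1$. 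Turning this into a bound valid uniformly for all odd $m$, rather than only asymptotically, is the delicate point; the smallest instance $m=5$ is handled by the explicit subspaces in the $m=5$ example, and the scaling relation $f(\lambda H)=\lambda^{-1}f(H)$ helps propagate such configurations to larger $m$.
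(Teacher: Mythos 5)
Your treatment of the dimension and of part 2) coincides with the paper's: both read the dimension off Theorem~\ref{thm-gene}, and both get $d=6$ in part 2) from $3\mid n$ via Corollary~\ref{cor-distance}. You also dispose of part 1) for even $m$ this way, which is fine. But your argument for part 1) with $m$ odd has a genuine gap, which you yourself flag. Pigeonhole on $H\mapsto f(H)$ only guarantees a fibre containing at least $(2^{m-1}-1)/3$ two-dimensional subspaces, while by your own count the subspaces meeting a fixed $H_1$ nontrivially may number up to $3(2^{m-1}-1)$; since the guaranteed fibre size is an order of magnitude \emph{below} this threshold, nothing rules out that the large fibre consists entirely of pairwise-intersecting subspaces (for instance a subfamily of the pencil of subspaces through one fixed point), and you never exclude this. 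The fallback ideas do not repair it: the scaling relation $f(\lambda H)=\lambda^{-1}f(H)$ acts inside a single field $\Ftm$ and cannot transport the explicit $m=5$ configuration to $m=7$ or $m=9$, because $\gf(2^5)$ embeds in $\gf(2^m)$ only when $5\mid m$. So the case $q=2$, $m$ odd --- which you correctly identify as the crux --- remains unproven.

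The idea you are missing is the one the paper uses, and which its own Remark after Theorem~\ref{thm-gene} (inequality (\ref{condi})) advertises: the sphere packing bound. For $q=2$ and $\de=3$ the dimension you already computed gives $n-k=2m+1$, and one checks that
$$
\sum_{i=0}^{3}\binom{2^m-1}{i} > 2^{2m+1} \qquad \mbox{for all } m\ge 4
$$
(the tightest case is $m=4$, where $576>512$; for $m\ge 5$ the left side grows like $2^{3m}/6$). Hence a minimum distance $d\ge 7$ would violate the Hamming bound, so $d\le 6$, and the BCH bound $d\ge 6$ finishes the proof for every $m\ge 4$, odd or even, with no codeword construction at all. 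This one-line argument replaces your entire subspace construction and also covers the even-$m$ case of part 1) without invoking Corollary~\ref{cor-distance}.
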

\begin{proof}
1) The dimension follows from Theorem \ref{thm-gene}. Applying the BCH and the sphere packing bound, we can see that the minimum distance is $6$.

2) The dimension follows from Theorem \ref{thm-gene}. Since $q^m\equiv 1 \pmod{3}$, we have $3 \mid n$. Therefore, by Corollary \ref{cor-distance}, the minimum distance is $6$.
\end{proof}

\begin{theorem}
Suppose $m \ge 3$, then $\ol{\C}_{(3,m,4)}$ has parameters $[3^m-1, 3^m-2-4m, d]$, where $d=8$ if
$m$ is even and $d \geq 8$ if $m$ is odd.
\end{theorem}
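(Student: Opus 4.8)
The plan is to prove this statement in two pieces: the dimension, which follows mechanically from the earlier general theorem, and the minimum distance, which is the real content. The parameters claim length $3^m-1$ (always the case here), dimension $3^m-2-4m$, and a minimum distance $d$ equal to $8$ when $m$ is even and $\ge 8$ when $m$ is odd.

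For the dimension, I would invoke Theorem~\ref{thm-gene} with $q=3$ and $\de=4$. Writing $\de-1=3=\de_q\cdot 3+\de_0$ gives $\de_q=1$ and $\de_0=0$. Since $m\ge 3$, we have $\de=4\le 3^{m/2}-1$ when $m$ is even (here $3^{m/2}-1\ge 8$) and $\de=4\le 3^{(m+1)/2}-q=3^{(m+1)/2}-3$ when $m$ is odd (here $3^{(m+1)/2}-3\ge 6$), so in both parities the first branch of the relevant case applies and yields
$$
k=3^m-2-2m(\de_q(q-1)+\de_0)=3^m-2-2m\cdot 2=3^m-2-4m,
$$
as claimed. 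So the dimension is immediate and requires only checking that $\de=4$ lands in the small-$\de$ regime, which $m\ge 3$ guarantees.

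For the minimum distance, the BCH bound from Theorem~\ref{thm-gene} already gives $d\ge 2\de=8$ unconditionally, so the odd case is finished and the entire difficulty is proving $d\le 8$ (hence $d=8$) when $m$ is even. By Theorem~\ref{thm-mindis} it suffices to exhibit a reversible codeword $c(x)\in\C_{(3,m,4)}$ of weight exactly $\de=4$ with $c(1)\ne 0$; then $(x-1)c(x)\in\ocode$ has weight at most $8$, matching the lower bound. The natural source of such a codeword is the divisibility criterion behind Corollary~\ref{cor-distance}: when $\de\mid n$ one gets a reversible weight-$\de$ word of the form $\sum_{i=0}^{\de-1}a_i x^{ni/\de}$. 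For $m$ even we have $3^m-1=(3^{m/2}-1)(3^{m/2}+1)$, and since $3^{m/2}+1\equiv 2\pmod 4$ while $3^{m/2}-1\equiv 0\pmod 4$, we get $4\mid 3^{m/2}-1$ and therefore $4=\de\mid n$. Hence $\de\mid n$ holds precisely in the even case, so Corollary~\ref{cor-distance} applies directly and delivers $d=2\de=8$.

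The main obstacle is verifying that the even-parity divisibility $4\mid 3^m-1$ is both necessary and sufficient for the argument, and confirming that no analogous weight-$4$ reversible codeword exists to force $d=8$ in the odd case (which is why the theorem only claims $d\ge 8$ there). I would check that for $m$ odd, $3^m-1\equiv 2\pmod 4$, so $4\nmid n$ and Corollary~\ref{cor-distance} simply does not apply; this is consistent with the weaker conclusion $d\ge 8$ and explains the asymmetry in the statement. The cleanest writeup is thus: (i) read off the dimension from Theorem~\ref{thm-gene}; (ii) record the lower bound $d\ge 8$ from the BCH bound; (iii) observe $4\mid 3^m-1$ iff $m$ is even; (iv) for even $m$ apply Corollary~\ref{cor-distance} to conclude $d=8$.
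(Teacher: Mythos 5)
Your proposal is correct and follows essentially the same route as the paper's proof: dimension from Theorem~\ref{thm-gene}, the lower bound $d\ge 8$ from the BCH bound, and $d=8$ for even $m$ via the divisibility $4\mid n$ and Corollary~\ref{cor-distance}. One minor slip worth fixing: your claim that $3^{m/2}+1\equiv 2\pmod 4$ and $3^{m/2}-1\equiv 0\pmod 4$ is true only when $m/2$ is even (the roles are reversed when $m/2$ is odd), but the needed conclusion $4\mid 3^m-1$ for even $m$ still holds, most simply because $3^m=9^{m/2}\equiv 1\pmod 4$.
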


\begin{proof}
It follows from Theorem \ref{thm-gene} that the dimension of this code is equal to $q^m-2-4m$. By the BCH
bound, the minimum distance of $\overline{\C}_{(3,m,4)}$ is at least $8$.

When $m$ is even, $4$ divides $n$. Hence, the minimum distance of $\overline{\C}_{(3,m,4)}$ is equal to $8$ according to Corollary \ref{cor-distance}.
\end{proof}

We have the following conjecture concerning the case $q=3$.

\begin{conj}
When $\delta=4$, $m \geq 3$ is odd, and $q=3$,  $\ocode$ has minimum distance $d=8$.
\end{conj}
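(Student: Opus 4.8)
The plan is to keep the lower bound $d \ge 8$ for free and to concentrate every effort on producing a codeword of weight exactly $8$. The bound $d \ge 2\de = 8$ is already furnished by Theorem~\ref{thm-gene} (the BCH bound), so it suffices to exhibit one codeword of weight at most $8$. The even case rested on $4 \mid n$ together with Corollary~\ref{cor-distance}, but for odd $m$ one has $3^m \equiv 3 \pmod 4$, so $4 \nmid n = 3^m-1$ and that route is closed. Worse, since $\de = 4 \nmid n$ the reversible code $\tcode$ has minimum distance at least $\de+1 = 5$; as a reversible codeword of $\code$ is precisely a codeword of $\tcode$, the code $\code$ has \emph{no} reversible word of weight $4$, and Theorem~\ref{thm-mindis} cannot be invoked with $w = 4$. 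I would therefore build a weight-$8$ codeword of $\ocode$ directly.

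The first step is to translate membership into power sums. A word $c(x) = \sum_{k=1}^{8} a_k x^{e_k}$ with $a_k \in \gf(3)^*$ and distinct $\beta_k := \al^{e_k} \in \gf(3^m)^*$ lies in $\ocode$ if and only if $\sum_k a_k \beta_k^{j} = 0$ for $j \in \{0,\pm 1,\pm 2,\pm 3\}$. Because the $a_k$ lie in $\gf(3)$, Frobenius gives $\sum_k a_k \beta_k^{\pm 3} = \bigl(\sum_k a_k \beta_k^{\pm 1}\bigr)^3$, so the $j=\pm 3$ conditions are automatic and only $j \in \{0,\pm 1,\pm 2\}$ remain. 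I would then impose the inversion-symmetric ansatz: take the support closed under $\beta \mapsto \beta^{-1}$, say $\{\beta_i,\beta_i^{-1} : 1 \le i \le 4\}$, with the common coefficient $a_i \in \gf(3)^*$ on each pair (all eight elements distinct provided $\beta_i \ne \pm 1$ and the pairs differ). Writing $s_i = \beta_i + \beta_i^{-1}$, one checks $P_{-j} = P_j$; using $\beta_i^2 + \beta_i^{-2} = s_i^2 - 2$ and the relation $\sum_i a_i = 0$, the five conditions collapse to the three Vandermonde equations
\[
\sum_{i=1}^{4} a_i = 0, \qquad \sum_{i=1}^{4} a_i s_i = 0, \qquad \sum_{i=1}^{4} a_i s_i^2 = 0,
\]
for four distinct $s_i \in \gf(3^m)$ with $a_i \in \gf(3)^*$.

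The main obstacle is forcing the coefficients into $\gf(3)^*$. For any four distinct $s_i$ this system has a one-dimensional solution space, spanned by $a_i = \lambda \prod_{j \ne i}(s_i - s_j)^{-1}$; these are automatically nonzero, but they live in $\gf(3^m)$, and the real content is to choose the $s_i$ so that all four $a_i$ can be simultaneously scaled into $\gf(3)^*$. Every clean device for doing so hits the same \emph{subfield obstruction}. Taking the $s_i$ in $\gf(9)$, or as the four conjugates of a degree-$4$ element of $\gf(3^4)$ (which would force Galois invariance, hence $a_i \in \gf(3)$), or solving the two-pair symmetric system directly — which I find forces $u^2 = 2$ with $u = \beta_1 + \beta_1^{-1}$ — all require $\gf(9)$ or $\gf(3^4)$ to sit inside $\gf(3^m)$. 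For odd $m$ one has $\gcd(2,m) = \gcd(4,m) = 1$, so $\gf(9) \cap \gf(3^m) = \gf(3^4) \cap \gf(3^m) = \gf(3)$, and every value $\beta + \beta^{-1}$ with $\beta \in \gf(3^m)$ lies in $\gf(3^m)$ yet never in $\gf(9) \setminus \gf(3)$. This is exactly why the even case is elementary and the odd case genuinely delicate, and it is the step I expect to be hard.

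To push past the obstruction I would try one of two routes. The geometric route mirrors the preceding corollary: realise $\codep$ and $\codem$ as containing suitable punctured generalized Reed--Muller subcodes over $\gf(3)$, take minimum-weight codewords supported on affine subspaces $H_i$, and glue a $\codep$-word to a $\codem$-word through a condition of the shape $((H_1 \cup H_2)\setminus\{0\})^{(-1)} = (H_3 \cup H_4)\setminus\{0\}$ so that the sum is reversible, of weight $8$, and annihilated at $x=1$; existence of the required subspaces again hinges on the same subfield constraint. The analytic route is an existence argument: the number of weight-$8$ words of $\ocode$ equals, up to a normalizing factor, the number of solutions of the power-sum system, and a heuristic count gives an order of magnitude near $3^{4m}/8!$, overwhelmingly positive. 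Converting this heuristic into a proof valid for every odd $m \ge 3$ requires an exponential-sum (equivalently, point-counting) estimate for the associated variety with an error term strictly below the main term, and I expect this estimate — not any of the preceding bookkeeping — to be the crux of a complete proof.
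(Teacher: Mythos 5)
The statement you were asked to prove is not a theorem in the paper at all: it is stated as an open conjecture, supported only by numerical experiments and by the single example $m=3$, $q=3$, where $\ol{\C}_{(3,3,4)}$ is found to have parameters $[26,13,8]$. So there is no proof in the paper to compare against, and your attempt, by your own honest admission, does not close the question either. The bookkeeping you do is correct and worth recording: the lower bound $d \ge 8$ is the BCH bound; for odd $m$ one has $4 \nmid n = 3^m-1$, so Corollary~\ref{cor-distance} is unavailable, and by the Tzeng--Hartmann result quoted in the introduction $\tcode$ has minimum distance at least $\de+1=5$, so Theorem~\ref{thm-mindis} cannot be invoked with $w=4$; and under your inversion-symmetric ansatz with coefficients in $\gf(3)$, the membership conditions do collapse (via Frobenius killing the $j=\pm 3$ conditions and $\beta^2+\beta^{-2}=s^2-2$) to the three Vandermonde equations $\sum_i a_i = \sum_i a_i s_i = \sum_i a_i s_i^2 = 0$ with $s_i=\beta_i+\beta_i^{-1}$ distinct.

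The genuine gap is the existence step, and you have named it yourself: the solution line of the $3\times 4$ Vandermonde system is spanned by $a_i=\prod_{j\ne i}(s_i-s_j)^{-1}$, and nothing in your argument produces four distinct admissible $s_i$ (each of the form $\beta+\beta^{-1}$ with all eight support elements distinct) for which the ratios $a_i/a_1$ all lie in $\gf(3)^*$. Your subfield discussion only rules out certain natural choices of the $s_i$; it is not a proof of an obstruction, and neither of your two escape routes is carried out: the Reed--Muller gluing is left conditional on unverified subspace existence, and the counting route openly defers to an unproved point-counting or exponential-sum estimate. In short, what you have written is a reasonable research program rather than a proof, which is consistent with the fact that the paper itself leaves this statement open; any review claiming your argument settles the conjecture would be wrong, and the crux you isolated (forcing the Lagrange-type coefficients into $\gf(3)^*$ for every odd $m\ge 3$) is precisely what remains to be done.
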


\begin{example}
Let $m=3$ and $q=3$. Let $\alpha$ be a generator of $\gf(3^3)^*$ with $\alpha^3-\alpha+1=0$.
Then $\ocode$ has parameters $[26,13, 8]$ and generator polynomial
$$
x^{13} + x^{12} + 2x^{11} + 2x^{10} + x^8 + 2x^5 + x^3 + x^2 + 2x + 2.
$$
This is the best ternary cyclic code and has the same parameters as the best linear code with the
same length and dimension.
\end{example}

\section{Concluding remarks}

In this paper, we determined the dimension of the reversible BCH code $\overline{\C}_{(q,\, m,\, \delta)}$ when $\delta$ is relatively small, and developed lower and upper bounds on its dimension for $\delta=q^\lambda$, where $\frac{m}{2} \le \lambda \le m-1$. We determined the minimum distance of $\ocode$ in two special cases. Moreover, we settled the parameters of the code $\overline{\C}_{(q,\, m,\, \delta)}$ for some very small $\delta$.

However, the parameters of the reversible BCH codes $\overline{\C}_{(q,\, m,\, \delta)}$ are still open in most cases. It would be nice if further progress on the determination of the parameters of the codes $\overline{\C}_{(q,\, m,\, \delta)}$ can be made. According to the tables of best cyclic codes documented in \cite{Dingbk15},   the binary reversible BCH code $\overline{\C}_{(q,\, m,\, \delta)}$ is the best possible cyclic code in almost all cases (see Table \ref{tab-revcodeBCH1}). Hence, it is worthwhile to study the code $\overline{\C}_{(q,\, m,\, \delta)}$.

\begin{table}[ht]
\begin{center}
\caption{Examples of binary reversible BCH code $\overline{\C}_{(2,\, m,\, \delta)}$}\label{tab-revcodeBCH1}
\begin{tabular}{|r|r|r|r|r|c|c|} \hline
$m$ & $n$ &   $k$   & $d$  & $\delta$ & Best cyclic?  & Optimal? \\ \hline
4 & 15 & 6 & 6 &  3  & Yes & Yes \\
4 & 15 & 2 & 10 & 5     & Yes & Yes \\
5 & 31 & 20 & 6 & 3     & Yes & Yes \\
5 & 31 & 10 & 10 & 5   & No & No \\
6 & 63 & 50 & 6 &  3     & Yes & Yes \\
6 & 63 & 38 & 10 & 5   & Yes & Unknown \\
6 & 63 & 26 & 14 & 7   & Yes & No \\
6 & 63 & 20 & 18 & 9   & Yes & Unknown \\
6 & 63 & 14 & 22 & 11  & Yes & No \\
6 & 63 & 2 & 42 & 13     & Yes & Yes \\ \hline
\end{tabular}
\end{center}
\end{table}

\end{document}